\tikzset{join/.code=\tikzset{after node path={%
\ifx\tikzchainprevious\pgfutil@empty\else(\tikzchainprevious)%
edge[every join]#1(\tikzchaincurrent)\fi}}}
\tikzset{>=stealth',every on chain/.append style={join},
         every join/.style={->}}
\newtheorem{thm}{Theorem}[section]   
\newtheorem{lemma}[thm]{Lemma}         
\newtheorem{prop}[thm]{Proposition}  
\theoremstyle{definition}
\newtheorem{ex}[thm]{Example}        
\def\bfmath#1{\mathchoice
        {\mbox{\boldmath$#1$}}%
        {\mbox{\boldmath$#1$}}%
        {\mbox{\boldmath$\scriptstyle#1$}}%
        {\mbox{\boldmath$\scriptscriptstyle#1$}}}%
\def\sfsvec{\ensuremath{\bfmath{\xi}}}
\newcommand{\spex}[3]{{#1}^{\raisebox{#2}{\scriptsize$#3$}}}
\newcommand{\join}{\mathcal{J}}
\newcommand{\cone}{\operatorname{cone}}
\newcommand{\RR}{\ensuremath{\mathbb{R}}}
\newcommand{\PP}{\ensuremath{\mathbb{P}}}
\newcommand{\C}{\ensuremath{\mathcal{C}}}
\newcommand{\sref}[1]{Section~\ref{#1}}
\newcommand{\propref}[1]{Proposition~\ref{#1}}
\newcommand{\thmref}[1]{Theorem~\ref{#1}}
\newcommand{\fref}[1]{Figure~\ref{#1}}
\DeclareMathOperator*{\argmin}{arg\,min}
\newcommand{\dadi}{\ensuremath{\partial\texttt{a}\partial\texttt{i}}}
\newcommand{\fastNeutrino}{\texttt{fastNeutrino}}
\newcommand{\fastsimcoal}{\texttt{fastsimcoal2}}
\newcommand{\msprime}{\texttt{msprime}}
\title{Geometry of the sample frequency spectrum and\\ the perils of demographic inference}
\author[a,1]{Zvi Rosen}
\author[b,c,1]{Anand Bhaskar}
\author[d]{Sebastien Roch}
\author[a,e,f,2]{Yun S. Song}
\affil[a]{Department of Statistics, University of California, Berkeley, CA 94720, USA}
\affil[b]{Department of Genetics, Stanford University, Stanford, CA 94305, USA}
\affil[c]{Howard Hughes Medical Institute, Stanford University, Stanford, CA 94305, USA}
\affil[d]{Department of Mathematics, University of Wisconsin, Madison, WI 53706, USA}
\affil[e]{Computer Science Division, University of California, Berkeley, CA 94720, USA}
\affil[f]{Chan Zuckerberg Biohub, San Francisco, CA 94158, USA}
\affil[1]{Z.R. and A.B. contributed equally to this work}
\affil[2]{To whom correspondence should be addressed: yss@berkeley.edu}
\date{
\today
}
\begin{document}
\maketitle

\renewcommand{\thefootnote}{\fnsymbol{footnote}}

\begin{abstract} 
The sample frequency spectrum (SFS), which describes the distribution of mutant alleles in a sample of DNA sequences, is a widely used summary statistic in population genetics.  The expected SFS has a strong dependence on the historical population demography and this property is exploited by popular statistical methods to infer complex demographic histories from DNA sequence data.   Most, if not all, of these inference methods exhibit pathological behavior, however.
Specifically, they often display runaway behavior in optimization, where the inferred population sizes and epoch durations can degenerate to 0 or diverge to infinity, and show undesirable sensitivity of the inferred demography to perturbations in the data.  The goal of this paper is to provide theoretical insights into why such problems arise.  To this end, we characterize the geometry of the expected SFS for piecewise-constant demographic histories and use our results to show that the aforementioned pathological behavior of popular inference methods is intrinsic to the geometry of the expected SFS.
We provide explicit descriptions and visualizations for a toy model with sample size 4, and generalize our intuition to arbitrary sample sizes $n$ using tools from convex and algebraic geometry.  We also develop a universal characterization result which shows that the expected SFS of a sample of size $n$ under an \emph{arbitrary} population history can be recapitulated by a piecewise-constant demography with only $\kappa_n$ epochs, where $\kappa_n$ is between $n/2$ and $2n-1$. The set of expected SFS for piecewise-constant demographies with fewer than $\kappa_n$ epochs is open and non-convex, which causes the above phenomena for inference from data.
\end{abstract}

\clearpage
\section{Introduction}

\label{sec:intro}

The sample frequency spectrum (SFS), also
known as the site or allele frequency spectrum, is a fundamental statistic in
population genomics for summarizing the genetic variation in a sample of DNA sequences. Given
a sample of $n$ sequences from a panmictic (i.e., randomly mating) population, the SFS is a vector of
length $n-1$ of which the $k$th entry corresponds to the number of segregating sites each with $k$ mutant (or derived) alleles and $n-k$ ancestral alleles.
The SFS provides a compact way to summarize $n$ sequences of arbitrary length into just $n-1$ numbers, and is frequently
used in empirical population genetic studies to test for deviations from
equilibrium models of evolution. For instance, the SFS has been widely used to infer demographic
history where the effective population size has changed over time
\citep{nielsen2000estimation,gutenkunst:2009,gravel2011demographic,keinan:2012,excoffier:2013,bhaskar:2015}, and to test for selective neutrality
\citep{kaplan:1989,achaz:2009}.  In fact, many commonly used
population genetic statistics for testing neutrality, such as Watterson's $\theta_{W}$
\citep{watterson:1975}, Tajima's $\theta_{\pi}$ \citep{tajima:1983}, and Fu and
Li's $\theta_{FL}$ \citep{fu:1993} can be expressed as linear functions
of the SFS \citep{durrett:2008}.

In the coalescent framework \citep{kin:1982:EPS,kin:1982:JAP,kin:1982:SPA},  the {\em unnormalized expected} SFS $\sfsvec_n$ 
for a random sample of $n$ genomes drawn from a population
is obtained by taking the expectation of the SFS over the distribution of sample genealogical histories under a specified population demography.  
In this work, we will be concerned with well-mixed, panmictic populations with time-varying historical population sizes,
evolving according to the neutral coalescent process with the infinite-sites model of mutation. 
The coalescent arises as the continuum limit of a large class of
discrete models of random mating, such as the Wright-Fisher, Moran, and Cannings
exchangeable family of models \citep{mohle:2001}. The infinite-sites model
postulates that every mutation in the genealogy of a sample occurs at a distinct
site, and is commonly employed in population genetic studies for organisms with
low population-scaled mutation rates, such as humans.  The SFS also appears in
the context of statistical modeling as a vector of probabilities.  In
particular, the {\em normalized expected} SFS $\widehat{\sfsvec}_{n}$, defined by
normalizing the entries of $\sfsvec_n$ so that they sum to 1, gives
the probability that a random mutation appears in $k$ out of $n$ sequences in the sample.
Unless stated otherwise, we use the term expected SFS to refer to the
unnormalized quantity $\sfsvec_n$.

The expected SFS is strongly influenced by
the demographic history of the population, and extensive theoretical and
empirical work has been done to characterize this dependence \citep{Fu,wakeley1997estimating,PBK,marth:2004,chen2012joint,kamm2017efficient,jouganous2017inferring}.
\citet{Fu} showed that under the infinite-sites model for a panmictic population
with constant size and no selection, the expected SFS is given by $\sfsvec_n=\theta\cdot\big(1,\frac{1}{2},\ldots,\frac{1}{n-1}\big)$, where $\theta/2$ denotes the population-scaled 
mutation rate. When the population size is variable, however, the formula
for the expected SFS depends on the entire population size history. In particular,
\citet{PK} showed that the expected SFS under a time-varying population size is
given by $\sfsvec_n = A_n \mathbf{c}$, with $A_n$ being an $(n-1)$-by-$(n-1)$ invertible matrix
that only depends on $n$ and $\mathbf{c}=(c_2,\ldots, c_n)$, where
$c_m$ denotes the expected time to the first coalescence event in a random
sample of size $m$ drawn from the population at present. 
For any time-varying population size function $\eta(t)$, $c_m$ is given by the following expression:
\begin{equation}
c_m = \int_0^\infty \binom{m}{2} \frac{1}{\eta(t)} \exp \left[
-\binom{m}{2} \int_0^t \frac{1}{\eta(x)} \mathrm{d}x \right] \mathrm{d}t. 
\label{eq:c_m}
\end{equation}

A natural statistical question that arises when using the SFS for demographic
inference is whether it is theoretically possible to reconstruct the population history $\eta(t)$ from the
expected SFS $\sfsvec_n$ of a large enough sample size $n$. This question was famously answered
by \citet{MFP} in the negative, by constructing a way of perturbing any
population size history without altering the expected SFS for all sample sizes. However, in subsequent work by two of this paper's authors \citep{bhaskar2014descartes}, it
was shown that for a wide class of biologically plausible population histories,
such as those given by piecewise-constant and piecewise-exponential functions,
the expected SFS of a finite sample size is sufficient to uniquely identify the
population history. These two results give us insight into the map from
population size histories to the expected SFS vectors. The space of {\em all}
possible population size histories is of infinite dimension, while the expected
SFS vectors for any fixed sample size $n$ form a finite-dimensional space;
naturally, the pre-image of an expected SFS under this map will typically be
an infinite set of population size histories. However, imposing conditions such
as those of \citet{bhaskar2014descartes} restricts us to a function space 
of finite dimension, so that the pre-images of
the expected SFS may become finite/unique.

While the above results are concerned with the identifiability of demographic
models from noiseless SFS data, they do not directly provide an explicit
characterization of the geometry of the expected SFS as a function of the
demographic  model.  Studying such geometry would be very useful for
understanding the  behavior of inference algorithms which
perform optimization by repeatedly computing the image of the map from the
space of demographic parameters to the expected SFS while 
trying to minimize the deviation of the expected SFS from the observed SFS data.
To this end, our main contribution is a universal characterization of the space of expected SFS for any
sample size $n$ under \emph{arbitrary population size histories}, in terms of the
space of expected SFS of piecewise-constant population size functions with $O(n)$ epochs.  This is a useful reduction because the
latter space is much more tractable for mathematical analysis and computation.
We provide a complete geometric description of this space for a sample of size
$n=4$, and generalize our intuition to arbitrary sample sizes $n$ using tools from convex and algebraic geometry.
Our reduction also provides an explanation for a puzzling phenomenon frequently
observed in empirical demographic inference studies -- namely, for some observed
SFS data, the optimization procedure for inferring population histories 
sometimes exhibits 
pathological behavior where the inferred population sizes and epoch durations can
degenerate to 0 or diverge to $\infty$.

\section{Piecewise-Constant Demographies}
\label{sec:piecewise_constant}
Let $\Pi_k$ be the set of piecewise-constant population size functions with $k$
pieces. Any population size function in $\Pi_k$ is described by $2k-1$ positive
numbers, representing the $k$ population sizes $(y_1,\ldots,y_k)$ and the $k-1$ time points 
$(t_1,\ldots,t_{k-1})$ when
the population size changes. Let $\Xi_{n,k}$, which we call the $(n,k)$-SFS manifold\footnote{
The sets $\Xi_{n,k}$ and $\C_{n,k}$ are not technically manifolds; they would be more accurately described as semialgebraic sets. However, for expository purposes, we use the widely known term ``manifold.''
},
denote the set of all expected SFS vectors for a sample of size $n$ that can be
generated by population size functions in $\Pi_k$. Similarly, let $\C_{n,k}$,
called the $(n,k)$-coalescence manifold, denote the set of all vectors
$\mathbf{c}=(c_2,\ldots,c_n)$ giving the expected first coalescence times of samples of
size $2, \ldots, n$ for population size functions in $\Pi_k$. Let
$\widehat{\Xi}_{n,k}$ and $\widehat{\C}_{n,k}$ respectively be equal to the normalization of
all points in $\Xi_{n,k}$ and $\C_{n,k}$ by their $\ell_1$-norms (i.e., the sums of
their coordinates). Note that both manifolds live in $\RR^{n-1}$
and their normalized versions live in the $(n-2)$-dimensional simplex
$\Delta^{n-2}$; this is the set of nonnegative vectors in $\RR^{n-1}$ whose
coordinates sum to $1$.

Now that we have defined our basic objects of study, we can describe the remainder
of the paper:
In \sref{sec:4k}, we provide a complete geometric picture of the $\Xi_{4,k}$ SFS manifold
describing the expected SFS for samples of size $n=4$ under piecewise-constant
population size functions with an arbitrary number $k$ of pieces. We make explicit
the map between regions of the demographic model space and the corresponding probability
vectors, and this will foreshadow some of the difficulties with population size inference
in practice.
In \sref{sec:nk}, we develop a characterization of the space of expected
SFS for arbitrary population size histories.  In particular, we show that for
any sample size $n$, there is a finite integer $\kappa_n$ such that the expected
SFS for a sample of $n$ under any population size history can be generated by a
piecewise-constant population size function with at most $\kappa_n$ epochs.
Stated another way, we show that the $\Xi_{n, \kappa_n}$ SFS manifold contains the
expected SFS for all possible population size histories, no matter how
complicated their functional forms. We establish bounds on $\kappa_n$ that are
linear in $n$, and along the way prove some interesting results regarding the
geometry of the general $\Xi_{n,k}$ SFS manifold.
Finally, in \sref{sec:ml_inference}, we demonstrate the implications of our
geometric characterization of the expected SFS for the problem of demographic
inference from noisy genomic sequence data.

Before proceeding further, we state a proposition regarding the structure of the
map from $\Pi_k$ to $\C_{n,k}$, which we will call $\chi(\vec{x},\vec{y})$; the
vector of  $k-1$ transformed breakpoints is denoted by $\vec{x}=(x_1,\ldots,x_{k-1})$ and defined below, while the vector of population sizes in the $k$ epochs
is denoted by $\vec{y}=(y_1,\ldots,y_k)$. This allows us to explain the algebraic nature of most of our proofs. 
All proofs of the results presented in this paper are deferred to \sref{sec:proofs}.

\begin{prop}
Fix a piecewise-constant population size function in $\Pi_k$ with epochs
$[t_0,t_1)$, $[t_1,t_2),\ldots$, $[t_{k-1},t_k)$, where $0 = t_0 < t_1 < \cdots
< t_{k-1} < t_k = \infty$, and which has constant population size value $y_j$ in the
epoch $[t_{j-1},t_j)$ for $j=1,\ldots,k$.  Let $x_j = \exp[-(t_{j}- t_{j-1})/y_j]$ for $j=1,\ldots,k$, where $x_k = 0$ (corresponding to time
$T = \infty$), and define $x_0 = 1$ (corresponding to time $T = 0$) for
convenience.  The vectors $(x_1,\ldots,x_{k-1},y_1,\ldots,y_k)$, where $0 <
x_j < 1$ and $y_j > 0$ for all $j$, (uniquely) identify the population size functions 
in $\Pi_k$, and they satisfy both of the following equations:

\begin{align}
 \left[ \arraycolsep=1pt \begin{array}{cccc}
x_0(1-x_1) & x_0x_1(1-x_2) & \ldots & \left(\prod_{i=0}^{k-1} x_i\right) ( 1- x_k) \\[3mm]
\frac{1}{3}x_0^3(1-x_1^3) &\frac{1}{3} x_0^3x_1^3(1-x_2^3) & \ldots & \frac{1}{3}\left(\prod_{i=0}^{k-1} x_i^3\right) ( 1- x_k^3) \\
\vdots & \vdots & \ddots & \vdots \\
\medmuskip=-1mu
\thinmuskip=-1mu
\thickmuskip=-1mu
\nulldelimiterspace=0pt
\scriptspace=0pt
{\scriptstyle \frac{1}{{\binom{n}{2}}}x_0^{\binom{n}{2}}\Big(1-x_1^{\binom{n}{2}}\Big)} \hspace*{3mm} &
{\scriptstyle \frac{1}{{\binom{n}{2}}}x_0^{\binom{n}{2}}x_1^{\binom{n}{2}}\Big(1-x_2^{\binom{n}{2}}\Big)} &
\ldots &
{\scriptstyle \frac{1}{{\binom{n}{2}}}\Big(\prod_{i=0}^{k-1} x_i^{\binom{n}{2}}\Big) \Big( 1- x_k^{\binom{n}{2}}\Big)} 
\end{array} \right] \left[ \begin{array}{c}
y_1 \\ y_2\\ \vdots \\ y_k \end{array} \right] &  =  
\left[ \begin{array}{c} c_2 \\ c_3\\ \vdots \\ c_n \end{array} \right] &
\label{formula1}
\\
\text{ and } \hspace{1cm} \left[\begin{array}{cccc} 1 & x_1 & \ldots & \prod_{i=1}^{k-1} x_i  \\
\frac{1}{3} & \frac{1}{3} x_1^3 & \cdots & \frac{1}{3} \prod_{i=1}^{k-1} x_i^3  \\
\vdots & \vdots & \ddots & \vdots \\
 \frac{1}{\binom{n}{2}} & \frac{1}{\binom{n}{2}} x_1^{\binom{n}{2}} & \cdots &  \frac{1}{\binom{n}{2}} \prod_{i=1}^{k-1} x_i^{\binom{n}{2}} \\
\end{array} \right]
\left[\begin{array}{c} y_1 \\ y_2 - y_1 \\ \vdots \\ y_k - y_{k-1} \end{array} \right] & = 
\left[ \begin{array}{c} c_2 \\ c_3\\ \vdots \\ c_n \end{array} \right], &
\label{formula2}
\end{align}
where $c_m$ is the expected first coalescence time for a sample of size $m$, as defined in \eqref{eq:c_m}.
\label{prop:2formulas}
\end{prop}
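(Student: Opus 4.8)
The plan is to evaluate $c_m$ in closed form for a piecewise-constant rate function, recognize the result as the matrix--vector product \eqref{formula1}, and then obtain \eqref{formula2} from it by summation by parts. The reparametrization claim is quick: given $\vec{y}$ with all $y_j>0$ and $\vec{x}=(x_1,\dots,x_{k-1})$ with all $0<x_j<1$, one recovers the epoch lengths from $t_j-t_{j-1}=-y_j\log x_j>0$, so $t_j=-\sum_{i=1}^{j}y_i\log x_i$ is strictly increasing and the breakpoints are uniquely determined; conversely $t_j-t_{j-1}>0$ forces $0<x_j<1$, so $(\vec{x},\vec{y})$ and $(\vec{t},\vec{y})$ are two coordinate systems for the same set $\Pi_k$.

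For the core computation, recall that $c_m$ is the expected first coalescence time in a size-$m$ sample, and that the probability no coalescence has occurred by time $t$ is $\exp[-\binom m2\int_0^t\eta(x)^{-1}\,dx]$; hence, by $\mathbb{E}[T]=\int_0^\infty\Pr(T>t)\,dt$ (cf.\ \eqref{eq:c_m}),
\[
c_m=\int_0^\infty \exp\!\left[-\tbinom m2\!\int_0^t\tfrac{1}{\eta(x)}\,dx\right]\,dt .
\]
Writing $\Lambda(t)=\int_0^t\eta(x)^{-1}dx$, on each epoch $[t_{j-1},t_j)$ the function $\Lambda$ is affine with slope $1/y_j$, and $t_i-t_{i-1}=-y_i\log x_i$ gives $\exp[-\binom m2\Lambda(t_{j-1})]=\left(\prod_{i=1}^{j-1}x_i\right)^{\binom m2}$ (an empty product for $j=1$, consistent with $x_0=1$). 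Splitting $\int_0^\infty$ over the $k$ epochs, on each epoch the integrand is a pure exponential in $t$, and a one-line substitution gives
\[
\int_{t_{j-1}}^{t_j}\!\exp\!\left[-\tbinom m2\Lambda(t)\right]\,dt=\frac{y_j}{\binom m2}\left(\prod_{i=1}^{j-1}x_i\right)^{\binom m2}\!\left(1-x_j^{\binom m2}\right),
\]
where on the unbounded last epoch $\eta\equiv y_k$ is a finite positive constant, so the integral converges and the factor $1-x_j^{\binom m2}$ collapses to $1-0=1$ — exactly the content of the convention $x_k=0$. Summing over $j=1,\dots,k$ and letting $m$ run over $2,\dots,n$ yields \eqref{formula1}.

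To pass from \eqref{formula1} to \eqref{formula2} I would use summation by parts. Set $a_j^{(m)}:=\frac{1}{\binom m2}\prod_{i=1}^{j-1}x_i^{\binom m2}$, so the $(m,j)$ entry of the matrix in \eqref{formula1} is $a_j^{(m)}-a_{j+1}^{(m)}$, with $a_{k+1}^{(m)}=\frac{1}{\binom m2}\prod_{i=1}^{k}x_i^{\binom m2}=0$ because $x_k=0$. Abel's identity then gives $\sum_{j=1}^{k}y_j\left(a_j^{(m)}-a_{j+1}^{(m)}\right)=\sum_{j=1}^{k}(y_j-y_{j-1})a_j^{(m)}$ with the convention $y_0:=0$, which is precisely the $m$th row of \eqref{formula2}. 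Equivalently, the matrix in \eqref{formula1} equals the matrix in \eqref{formula2} right-multiplied by the $k\times k$ lower-bidiagonal first-difference matrix, so the two linear systems have the same right-hand side $\mathbf{c}$.

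I do not anticipate a real obstacle: the argument is a direct per-epoch integration followed by a summation by parts. The only point requiring genuine care is the unbounded final epoch, where both the convergence of the defining integral and the vanishing of the phantom term $\prod_{i=1}^{k}x_i^{\binom m2}$ rely on $y_k$ being a finite positive constant — the same fact that makes the last columns of the two matrices line up in the summation-by-parts step.
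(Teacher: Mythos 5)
Your proposal is correct and follows essentially the same route as the paper: reduce $c_m$ to $\int_0^\infty \exp[-\binom{m}{2}\Lambda(t)]\,dt$ (the paper does this via integration by parts on the density, you via the survival-function identity), evaluate epoch by epoch to get \eqref{formula1}, and then pass to \eqref{formula2} by a telescoping rearrangement (the paper phrases your Abel summation as factoring out a bidiagonal difference matrix). The only cosmetic difference is that the paper first changes variables to $\tau=\Lambda(t)$ before splitting the integral, whereas you integrate directly over the epochs in $t$; the computation is identical.
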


\noindent
These two formulations provide two perspectives on the coalescence manifold $\C_{n,k}$:
\begin{enumerate}
\item In \eqref{formula1}, the left-hand matrix, call it $M_1(n,k)$, has each column of the same form with two parameters;
this indicates they all live in a 2-dimensional surface.
The vector $(y_1,\ldots,y_k)$ has all positive parameters.
This means that the vector $\mathbf{c}=(c_2,\ldots,c_n)$ is contained
in the cone over the surface described by the columns of $M_1$.

\item  In \eqref{formula2}, the left-hand matrix, call it $M_2(n,k)$
has each column of the same form with one parameter;
this indicates they all live on a curve. The vector $(y_1,y_2-y_1,\ldots,y_k-y_{k-1})$ on the left hand side has
parameters with possibly negative coordinates.
So the vector $\mathbf{c}=(c_2,\ldots,c_n)$ is contained in the linear span of the curve
described by the columns of $M_2$.
\end{enumerate}

\propref{prop:2formulas} gives us the algebraic mappings that will
serve as our objects of interest. Since the SFS manifold is
simply a linear transformation of the coalescence manifold,
we will use these maps as our entry into understanding the SFS manifold.

\section{The $\Xi_{4,k}$ SFS Manifold: A Toy Model}
\label{sec:4k}

The first in-depth study will involve the set of all possible
expected SFS for a sample of size $4$. We choose $n=4$
for a number of reasons: First, the cases of $n = 2$ and $3$ are
cones with simple boundaries in the line or plane.
Second, when $n=4$, the absolute SFS manifold lives in $\RR^3$,
which can be nicely visualized, and the normalized SFS manifold lives
in the $2$-simplex, i.e. the triangle with vertices $(1,0,0), (0,1,0)$,
and $(0,0,1)$. Finally, as observed in \propref{prop:2formulas},
the most interesting phenomena in SFS manifolds of any dimension are
fundamentally phenomena of curves and surfaces. These are already captured
in the $n=4$ case.

For the sake of completeness, we begin by formally describing the coalescence
manifolds $\C_{n,k}$ for the trivial cases of $n=2$ and $n=3$.

\begin{prop}
We list some basic results on the coalescence manifolds $\C_{n,k}$ for small values of $(n,k)$:
\begin{enumerate}
\item 
$\C_{n,1}   =  \bigg\{ \lambda \cdot\bigg(1,\dfrac{1}{3},\ldots,\dfrac{1}{\binom{n}{2}} \bigg) :  \lambda  > 0 \bigg\},$  for all $n$.
\item
$\C_{2,k} = \C_{2,1} =  \{ a  :   a  > 0\}$, for all $k \geq 1$.
\item 
$\C_{3,k} = \C_{3,2} =\{ (a,b) :  a > 0 \: \text{  and  } \:  0 < b < a\}$,   for all $k \geq 2.$
\end{enumerate}
\label{prop:trivial}
\end{prop}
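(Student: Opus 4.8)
The plan is to treat the three parts in order, using the matrix identity \eqref{formula1} of \propref{prop:2formulas} throughout, so that every claim becomes an elementary statement about positive combinations of the explicit columns of $M_1$. For part (1), a function in $\Pi_1$ is a single constant $\eta\equiv y_1>0$, so $x_1=x_k=0$ and $M_1(n,1)$ is the single column with entries $\binom{m}{2}^{-1}$, $m=2,\dots,n$; then \eqref{formula1} collapses to $\mathbf c = y_1\cdot\big(1,\tfrac13,\dots,\tfrac1{\binom{n}{2}}\big)$, and letting $y_1$ sweep $(0,\infty)$ gives exactly the stated open ray (recovering the classical constant-size expected SFS).

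Two auxiliary observations drive parts (2) and (3). \emph{Monotonicity}: $\C_{n,k}\subseteq\C_{n,k+1}$, since one may split the final epoch of a $\Pi_k$ demography into two epochs of equal population size; in the parametrization of \propref{prop:2formulas} this appends an arbitrary $x_k\in(0,1)$ and repeats $y_k$, which leaves both sides of \eqref{formula1} unchanged (equivalently, the new coordinate $y_{k+1}-y_k$ in \eqref{formula2} vanishes). \emph{Containment}: from \eqref{formula1}, $c_m=\sum_{j=1}^k\binom{m}{2}^{-1}\big(\prod_{i=0}^{j-1}x_i^{\binom{m}{2}}\big)\big(1-x_j^{\binom{m}{2}}\big)y_j$ is a sum of nonnegative terms whose $j=1$ summand is strictly positive, so $c_m>0$; and since $q\mapsto q^{-1}(1-x^q)$ is strictly decreasing on $q>0$ for fixed $x\in(0,1)$ (a one-line consequence of $e^z>1+z$) while $\prod_{i<j}x_i^q$ is non-increasing in $q$, each $c_m$ is strictly decreasing in $m$. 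Specializing to $n=2,3$ yields $\C_{2,k}\subseteq\{a:a>0\}$ and $\C_{3,k}\subseteq\{(a,b):a>0,\ 0<b<a\}$ for every $k$. Part (2) is then immediate: $\{a:a>0\}\supseteq\C_{2,k}\supseteq\C_{2,1}=\{a:a>0\}$, the last equality being part (1) with $n=2$.

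For part (3), monotonicity gives $\C_{3,2}\subseteq\C_{3,k}$ for $k\ge2$ and containment gives $\C_{3,k}\subseteq\{(a,b):a>0,\ 0<b<a\}$, so it suffices to prove $\C_{3,2}\supseteq\{(a,b):a>0,\ 0<b<a\}$. For $k=2,n=3$ equation \eqref{formula1} reads $c_2=(1-x_1)y_1+x_1y_2$ and $c_3=\tfrac13\big[(1-x_1^3)y_1+x_1^3y_2\big]$ with $x_1\in(0,1)$, $y_1,y_2>0$. Given a target $(a,b)$ with $0<b<a$, I would set $u:=(1-x_1)y_1$, so $y_1=u/(1-x_1)$ and $y_2=(a-u)/x_1$; substituting and simplifying turns $c_3=b$ into the linear relation $3b=u(1+x_1)+ax_1^2$. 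Solving for $u$ and imposing $y_1,y_2>0$, i.e. $0<u<a$, reduces the problem to exhibiting one $x_1\in(0,1)$ with $x_1^2<3b/a$ and $1+x_1+x_1^2>3b/a$. With $r:=3b/a\in(0,3)$, the first constraint confines $x_1$ to $(0,\min(1,\sqrt r))$ and the second (automatic when $r\le 3/4$) confines $x_1$ above the larger root $\rho=\tfrac12(-1+\sqrt{4r-3})$ of $x^2+x+1-r$; a short comparison shows $\rho<1$ (as $r<3$) and $\rho<\sqrt r$ (as $\sqrt{4r-3}<1+2\sqrt r$), so the admissible window for $x_1$ is nonempty, and any choice in it produces a genuine two-epoch demography realizing $(a,b)$. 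Combining the inclusions gives $\C_{3,k}=\C_{3,2}=\{(a,b):a>0,\ 0<b<a\}$ for all $k\ge2$.

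The main obstacle is this surjectivity step for $\C_{3,2}$: it is the only place where the coalescence map must be \emph{inverted} rather than merely bounded, and nonemptiness of the admissible interval for $x_1$ is the single point at which a genuine (if brief) inequality computation is unavoidable. Parts (1) and (2) and the upper containments in part (3) are otherwise just bookkeeping with the columns of $M_1$.
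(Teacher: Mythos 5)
Your proposal is correct and follows essentially the same route as the paper: both work directly from the explicit column decomposition of \propref{prop:2formulas}, obtain the upper containments from elementary positivity/monotonicity of the summands, and realize an arbitrary $(a,b)$ with $0<b<a$ by finding $x_1\in(0,1)$ satisfying $x_1^2<3b/a<1+x_1+x_1^2$ and then solving for positive $y_1,y_2$. The only difference is one of explicitness: where the paper argues that the two-epoch cones sweep out the target region via a continuity/limiting argument, you solve the same inequality system in closed form, which tightens an admittedly terse step.
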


Note that from \eqref{formula1} and \eqref{formula2}
for $\chi(\vec{x},\vec{y})$ (\sref{sec:piecewise_constant}), 
it follows that $\chi(\vec{x},a\vec{y})= a\chi(\vec{x},\vec{y})$ for $a > 0$. 
In words, rescaling the population sizes in each epoch by a constant $a$ also rescales the first coalescence times by $a$.
This implies that every
point in the coalescence manifold $\C_{n,k}$
generates a full ray contained in the $\C_{n,k}$
coalescence manifold. Another consequence
is that the normalized coalescence manifold $\widehat{\C}_{n,k}$ is
precisely the intersection of the
coalescence manifold $\C_{n,k}$ with the simplex $\Delta^{n-2}$.

With that justification, we begin to consider the normalized coalescence manifold
$\widehat{\C}_{4,k}$ living in the simplex.
As stated in \propref{prop:trivial}, $\C_{4,1}$ is a ray, which implies
that $\widehat{\C}_{4,1}$ is a single point.
We now characterize the set $\widehat{\C}_{4,2}$.

\begin{figure}[t]
\centerline{\includegraphics[width=\textwidth]{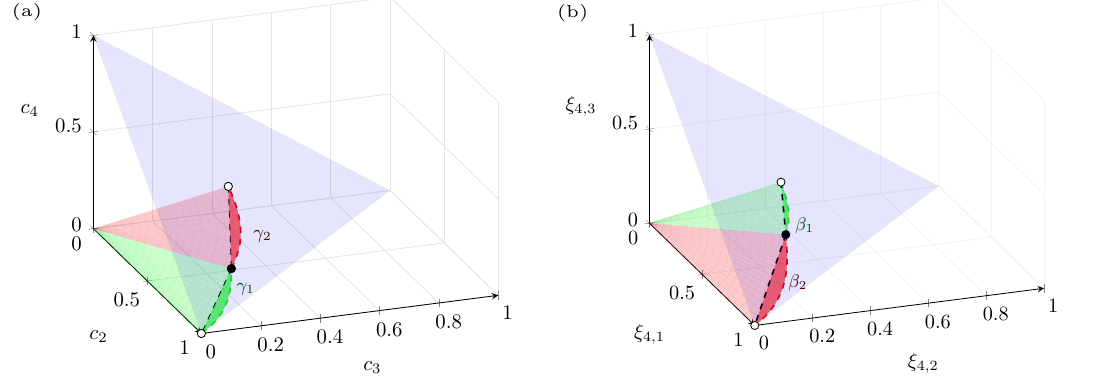}}

\caption{{\bf Coalescence and SFS manifolds for sample size $4$ and
$2$ population epochs}. (a) The coalescence manifold $\C_{4,2}$ is the union of
red and green cones. The 2-simplex, shaded in blue, intersects $\C_{4,2}$
in the normalized coalescence manifold $\widehat{\C}_{4,2}$. The green region
corresponds to small-then-large demographies;
the red region to large-then-small demographies. (b) The SFS manifold
$\Xi_{4,2}$ is the union of
red and green cones. The 2-simplex intersects $\Xi_{4,2}$
in the normalized SFS manifold $\widehat{\Xi}_{4,2}$. Here,
too, the green region
corresponds to small-then-large demographies;
the red region to large-then-small demographies. $\Xi_{4,2}$ is obtained
from $\C_{4,2}$ by a linear transformation. }
\label{fig:42}
\end{figure}

\begin{prop}
The manifold $\widehat{\C}_{4,2}$ is a two-dimensional subset of the 2-simplex
which can be described as the union of the point $\widehat{\C}_{4,1}$
with the interiors of the convex hulls of two curves
$\gamma_1$ and $\gamma_2$. The curves are parametrized as follows:
\begin{align*}
\gamma_1 & =  \left\{ \left(\dfrac{6}{6 + 2t^2 + t^5} ,
\dfrac{2t^2}{6 + 2t^2 + t^5},  \dfrac{t^5}{6 + 2t^2 + t^5}\right)
 :  0<t<1 \right\}, \\
\text{ and    } \hspace{5mm} \gamma_2 & =   \left\{
\left( \dfrac{6}{6 + 2[2]_t + [5]_t},
\dfrac{2[2]_t}{6 + 2[2]_t + [5]_t},
\dfrac{[5]_t}{6 + 2[2]_t + [5]_t} \right)  :  0<t<1 \right\},  
\end{align*}
where $[n]_t$ denotes $1 + \cdots + t^n$.
\label{prop:c42}
\end{prop}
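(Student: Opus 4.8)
The plan is to make everything explicit via the coordinates of \propref{prop:2formulas}. For $n=4$, $k=2$ the vector $\vec{x}$ of transformed breakpoints collapses to a single number $x_1=:t\in(0,1)$, and \eqref{formula1} reads $\mathbf{c}=y_1v_1(t)+y_2v_2(t)$ with $y_1,y_2>0$ and columns $v_1(t)=\big(1-t,\tfrac{1-t^3}{3},\tfrac{1-t^6}{6}\big)$, $v_2(t)=\big(t,\tfrac{t^3}{3},\tfrac{t^6}{6}\big)$. Factoring $t$ out of $v_2$ and, via $1-t^m=(1-t)[m-1]_t$, the factor $1-t$ out of $v_1$, one checks directly that the $\ell_1$-normalizations of the columns are the stated curves: $\widehat{v_2}(t)=\gamma_1(t)$ and $\widehat{v_1}(t)=\gamma_2(t)$. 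The observation that drives the proof is that $v_1(t)+v_2(t)=\big(1,\tfrac13,\tfrac16\big)=:P_0$ is \emph{independent of $t$}; its normalization $\widehat{P_0}=\big(\tfrac23,\tfrac29,\tfrac19\big)$ is exactly the point $\widehat{\C}_{4,1}$ of \propref{prop:trivial}.

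Next, using the scaling property recalled just before the statement, $\widehat{\C}_{4,2}=\C_{4,2}\cap\Delta^2$. A normalized positive combination of $v_1(t)$ and $v_2(t)$ is the same thing as an arbitrary convex combination of $\gamma_1(t)=\widehat{v_2}(t)$ and $\gamma_2(t)=\widehat{v_1}(t)$ (as $(y_1,y_2)$ ranges over $(0,\infty)^2$ the two weights range over $(0,1)$), so $\widehat{\C}_{4,2}=\bigcup_{t\in(0,1)}\big(\gamma_1(t),\gamma_2(t)\big)$, a union of open segments. Normalizing the identity $\|v_1(t)\|_1\,\gamma_2(t)+\|v_2(t)\|_1\,\gamma_1(t)=P_0$ by $\|P_0\|_1=\|v_1(t)\|_1+\|v_2(t)\|_1$ shows that $\widehat{P_0}$ lies strictly inside the segment $\big(\gamma_1(t),\gamma_2(t)\big)$ for every $t$; hence that segment is the disjoint union $\big(\gamma_1(t),\widehat{P_0}\big)\cup\{\widehat{P_0}\}\cup\big(\widehat{P_0},\gamma_2(t)\big)$, and $\widehat{\C}_{4,2}$ is star-shaped about $\widehat{P_0}$.

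The geometric heart is to show that $\gamma_1$ and $\gamma_2$ are \emph{strictly convex} arcs with no inflection points. For a plane curve $t\mapsto[w(t)]$ this amounts to the non-vanishing of $\det\!\big[w(t),w'(t),w''(t)\big]$ (independent of the homogeneous lift up to a nonzero cube factor); for the lifts $w(t)=(6,2t^2,t^5)$ of $\gamma_1$ and $w(t)=(6,2[2]_t,[5]_t)$ of $\gamma_2$ a short computation gives this determinant as $360\,t^4$ and $72\,t(1+3t+6t^2+5t^3)$ respectively, both positive on $(0,1)$; since moreover the first barycentric coordinate is strictly monotone along each arc, each $\gamma_i$ is a strictly convex arc. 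Consequently $\conv(\gamma_i)$ is the closed region bounded by $\gamma_i$ and the chord joining its two endpoints, so (i) the open segment between any two points of $\gamma_i$ lies in $\operatorname{int}\conv(\gamma_i)$, and (ii), using $\gamma_1(0^+)=(1,0,0)$, $\gamma_1(1^-)=\widehat{P_0}=\gamma_2(0^+)$, $\gamma_2(1^-)=(\tfrac13,\tfrac13,\tfrac13)$, the point $\widehat{P_0}$ is a vertex of each of $\conv(\gamma_1)$ and $\conv(\gamma_2)$.

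The two inclusions then follow. For $\subseteq$: $\big(\gamma_1(t),\widehat{P_0}\big)$ is the open segment between two points of the strictly convex arc $\gamma_1$ (namely $\gamma_1(t)$ and $\gamma_1(1^-)=\widehat{P_0}$), hence lies in $\operatorname{int}\conv(\gamma_1)$ by (i), and symmetrically $\big(\widehat{P_0},\gamma_2(t)\big)\subseteq\operatorname{int}\conv(\gamma_2)$; so $\widehat{\C}_{4,2}\subseteq\{\widehat{P_0}\}\cup\operatorname{int}\conv(\gamma_1)\cup\operatorname{int}\conv(\gamma_2)$. For $\supseteq$: $\widehat{P_0}\in\widehat{\C}_{4,2}$ was shown above; given $p\in\operatorname{int}\conv(\gamma_1)$, the ray from the vertex $\widehat{P_0}$ through $p$ leaves the bounded convex set $\conv(\gamma_1)$ at a boundary point $q\neq\widehat{P_0}$, and since a ray from an endpoint of a segment meets that segment only at that endpoint, $q$ cannot lie on the bounding chord; hence $q=\gamma_1(s)$ for some $s\in(0,1)$ and $p\in\big(\gamma_1(s),\widehat{P_0}\big)\subseteq\big(\gamma_1(s),\gamma_2(s)\big)\subseteq\widehat{\C}_{4,2}$, with the case $p\in\operatorname{int}\conv(\gamma_2)$ being symmetric. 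I expect the convexity step to be the main obstacle---choosing the right criterion and grinding out the determinant for $\gamma_2$---while the rest is bookkeeping with the explicit parametrizations. (A subtlety worth flagging: the point $\widehat{P_0}$ corresponds to $y_1=y_2$; if one insists the two epochs of a function in $\Pi_2$ have distinct sizes then $\widehat{P_0}$ is only a limit point, which is exactly why it is listed separately in the statement.)
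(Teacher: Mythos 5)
Your proof is correct and follows essentially the same route as the paper's: both decompose $\chi(x_1,y_1,y_2)$ into a combination of the constant-demography direction $(1,\tfrac13,\tfrac16)$ and a point on one of the two curves according to the sign of $y_2-y_1$, identify $\widehat{\C}_{4,1}=(\tfrac23,\tfrac29,\tfrac19)$ as the common endpoint through which every fiber segment passes, and prove the reverse inclusion by shooting a ray from that point through an arbitrary interior point until it exits through the curve. The only difference is that you explicitly verify strict convexity of $\gamma_1,\gamma_2$ via the $\det[w,w',w'']$ computation, a step the paper's argument uses implicitly but does not write out.
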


This set has some highly unpleasant geometry. First of all, the set is non-convex; it is also
neither closed nor open, because most of the boundary
is excluded with the exception of the point $(2/3,2/9,1/9)$.
The set is visualized in Figure~\ref{fig:42}(a).

In order to precisely illustrate the geometry of $\chi(\vec{x},\vec{y})$,
we will consider how contours in the domain map to contours in the image.
Specifically, we plot the images of lines with fixed values of $x_1$, respectively
fixed values of $(y_1,y_2)$, to $\C_{4,2}$ in the $2$-simplex. The resulting
contours are pictured in Figure~\ref{fig:contours}.

\begin{figure}[t]
\centerline{
\begin{tabular}{cc}
(a) \raisebox{-.9\height}{\includegraphics[width=.35\textwidth]{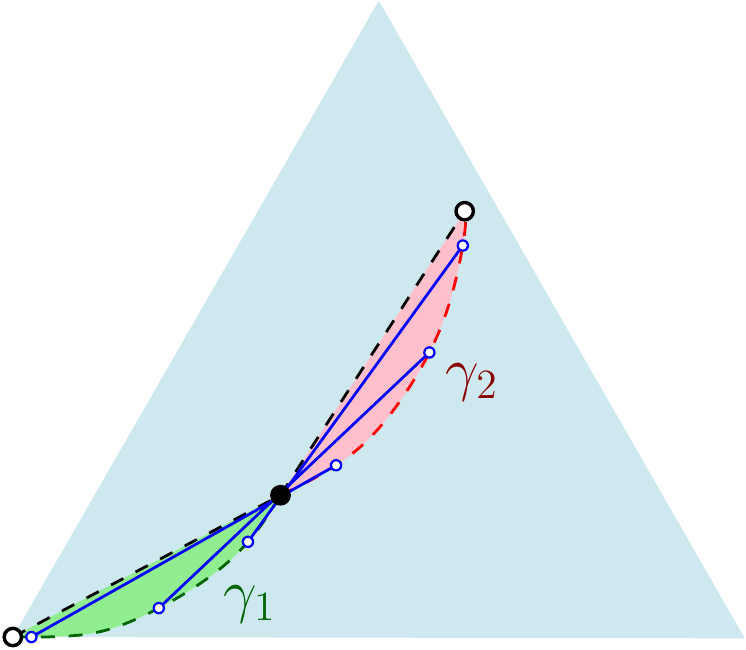}} &
(b) \raisebox{-.9\height}{\includegraphics[width=.35\textwidth]{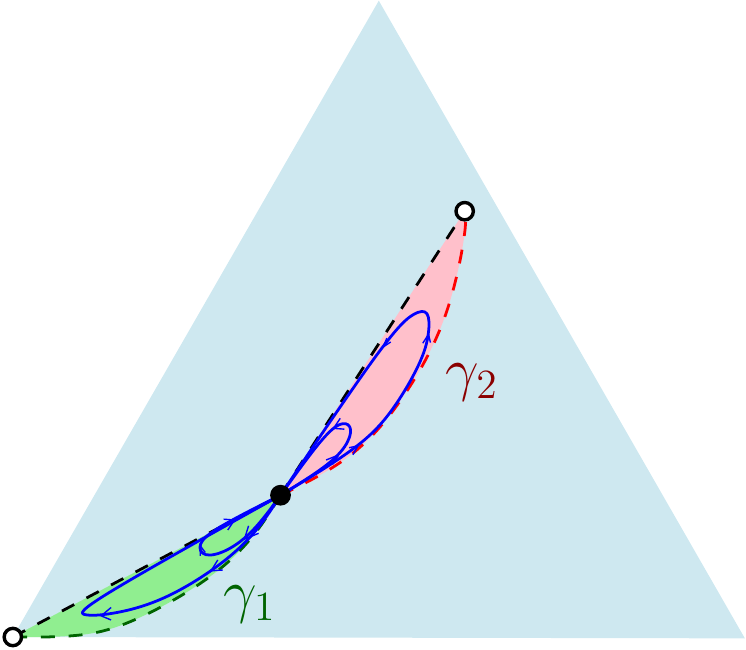}}
\end{tabular}
}
\caption{{\bf Fixed-time and fixed-size contours
in $\widehat{\C}_{4,2}$.} (a) The blue line segments correspond
to the image of $\chi_{4,2}(x^*,\vec{y})$ where $x^*$ is a constant fixing
the break-point between the two demographies.
The other input $\vec{y} = (y_1,y_2)$
varies over all positive vectors, though scaled $\vec{y}$ vectors point to the
same normalized value.
As $y_1/y_2 \to 0$, the image
approaches $\gamma_2$ and as $y_2/y_1 \to 0$, the image approaches
$\gamma_1$.  (b) The blue curves  correspond
to the image of $\chi_{4,2}(x,\vec{y}^*)$ where $\vec{y}^*$ is a fixed
vector indicating the population values
and $x$ takes all values in $(0,1)$. The endpoints $0$ and $1$ correspond
to breakpoints at $\infty$ and $0$ respectively. For $y_1^*< y_2^*$,
$x$ traces a loop in the green region; for $y_1^*>y_2^*$, $x$ traces a loop
in the red region.
}
\label{fig:contours}
\end{figure}

\begin{figure}[t]
\centerline{\includegraphics[width=\textwidth]{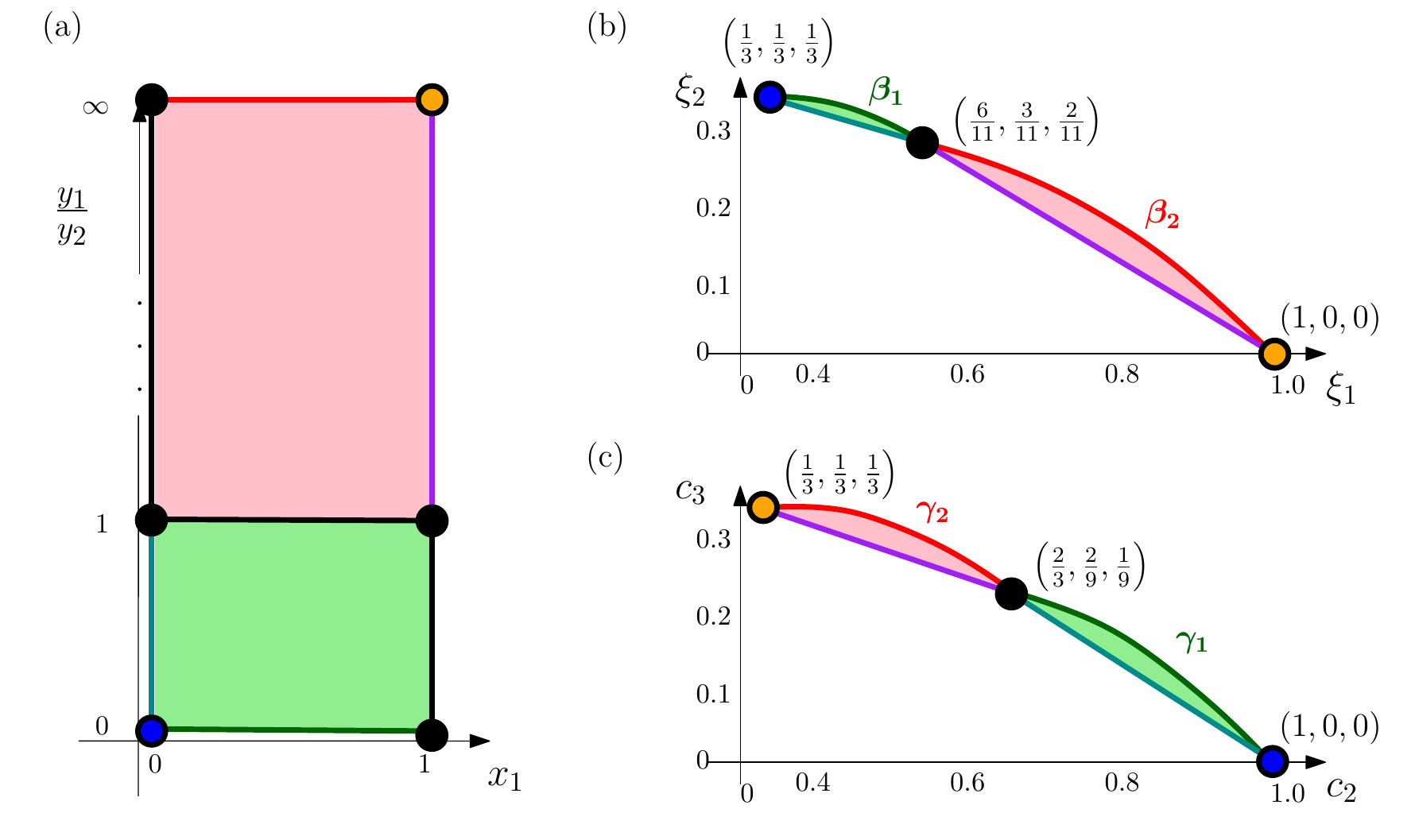}} 
\caption{{\bf Pairing the boundaries of demography space and $\widehat{\C}_{4,2}$}.
(a) The domain of $\chi_{4,2}$.  Note that
for fixed $y_1/y_2$, the normalized coalescence vector is the same.
(b) The normalized SFS manifold $\widehat{\Xi}_{4,2}$
projected onto its first two coordinates. 
(c) The normalized coalescence manifold $\widehat{\C}_{4,2}$ projected onto
its first two coordinates.
The red square at left corresponding to $y_1 > y_2$
maps to the red regions at right; the green square at left corresponding
to $y_2 < y_1$ maps to the green regions at right.
The black line segments on left (corresponding to $y_1/y_2 = 1$;
$y_2 < y_1$ and $x_1 = 0$ (equivalently $t_1 = \infty$); 
$y_2 > y_1$ and $x_1 = 1$ (equivalently $t_1 = 0$))
 all map to the central black points on right, since they
each mimic a constant demography. 
The green line corresponding to $y_1 = 0$
maps to the curve $\beta_1$ in $\widehat{\Xi}_{4,2}$ and the curve $\gamma_1$ in $\widehat{\C}_{4,2}$; 
the red line corresponding to
$y_2 = 0$ maps to the curve $\beta_2$ in $\widehat{\Xi}_{4,2}$ and the curve $\gamma_2$ in $\widehat{\C}_{4,2}$. The orange point
$(x_1 = 1, y_2 = 0)$ maps to $(\frac{1}{3},\frac{1}{3},\frac{1}{3})$ in $\widehat{\C}_{4,2}$
and maps to $(1,0,0)$ in $\widehat{\Xi}_{4,2}$.
The blue point $(x_1 = 0, y_1 = 0)$ maps to $(1,0,0)$ in $\widehat{\C}_{4,2}$
and $(\frac{1}{3},\frac{1}{3},\frac{1}{3})$ in $\widehat{\Xi}_{4,2}$.
The remaining aqua and violet segments
map to the segments of the same color.
}
\label{fig:boundaries}
\end{figure}

Finally, we consider how the map $\chi$ acts on the boundaries of
the domain. To aid visualization, we limit the inputs to
$x_1$ and $y_1/y_2$, since all rescalings of $y_1$ and $y_2$ by the same positive constant while keeping $x_1$ fixed  map to the same normalized coalescence vector.
The resulting map is illustrated in Figure~\ref{fig:boundaries}.

We note that the map fails to be one-to-one within the domain only
when $y_1 = 1$; this is also in the pre-image
of the point $(\frac{2}{3},\frac{2}{9},\frac{1}{9}) \in \widehat{\C}_{4,2}$. The inverse function theorem implies
that on the complement of $y_1 = 1$, the map is a homeomorphism.
This is consistent with our observation that the two rectangles in Figure~\ref{fig:boundaries}(a)
correspond to the two envelopes in Figure~\ref{fig:boundaries}(c).

\begin{prop}
For all values $k \geq 3$, the manifold $\widehat{\C}_{4,k} = \widehat{\C}_{4,3}$, and 
$\widehat{\C}_{4,3}$ is the interior of the convex hull of the following curve:
\begin{align*}
\gamma_3 = \left\{ \left(\dfrac{1}{1 + t^2 + t^5} ,
\dfrac{t^2}{1 + t^2 + t^5},  \dfrac{t^5}{1 + t^2 + t^5}\right)
 :  0<t<1 \right\}.
\end{align*}
\label{prop:c43}
\end{prop}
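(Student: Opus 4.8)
The plan is to go through the second matrix in \propref{prop:2formulas}. With $n=4$ put $\phi(z)=(z,\tfrac13 z^3,\tfrac16 z^6)$, so that $\phi'(z)=(1,z^2,z^5)$; then the $j$-th column of $M_1(4,k)$ in \eqref{formula1} is $\phi(z_{j-1})-\phi(z_j)=\int_{z_j}^{z_{j-1}}\phi'(z)\,dz$, where $1=z_0>z_1>\cdots>z_{k-1}>z_k=0$ are the partial products of the $x_i$. Hence $\mathbf c\in\C_{4,k}$ if and only if $\mathbf c=\int_0^1 g(z)\,(1,z^2,z^5)\,dz$ for some step function $g$ with at most $k$ intervals of constancy that is strictly positive on $(0,1)$ (its values being the epoch sizes $y_j$). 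Since $\gamma_3$ is exactly the radial projection of the curve $z\mapsto(1,z^2,z^5)$, $0<z<1$, onto the simplex, the assertion becomes: the $\ell_1$-normalizations of the vectors $\int_0^1 g\,(1,z^2,z^5)\,dz$ with $g$ a positive $3$-step function are precisely $\operatorname{int}\conv\gamma_3$, and allowing $g$ more pieces adds nothing.

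The inclusion $\widehat{\C}_{4,k}\subseteq\operatorname{int}\conv\gamma_3$ for every $k$ is the easy half. Dividing $\mathbf c=\int_0^1 g(z)(1,z^2,z^5)\,dz$ by $\|\mathbf c\|_1$ writes the normalized vector as $\int_0^1\rho(z)\,\gamma_3(z)\,dz$ with $\rho$ a probability density positive a.e.\ on $(0,1)$; it is therefore the barycenter of the arc $\gamma_3([0,1])$ against a measure of full support, which lies in the relative interior of $\conv\gamma_3([0,1])$, and adjoining the two endpoints $\gamma_3(0)=(1,0,0)$, $\gamma_3(1)=(\tfrac13,\tfrac13,\tfrac13)$ does not change that two-dimensional interior. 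This uses nothing about the number of pieces, so it gives $\widehat{\C}_{4,k}\subseteq\operatorname{int}\conv\gamma_3$ for all $k$; combined with the trivial inclusion $\widehat{\C}_{4,3}\subseteq\widehat{\C}_{4,k}$ (a $3$-epoch history is a degenerate $k$-epoch history) and the reverse inclusion at $k=3$ below, it also yields the stabilization $\widehat{\C}_{4,k}=\widehat{\C}_{4,3}$ for $k\ge 3$.

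The crux is $\operatorname{int}\conv\gamma_3\subseteq\widehat{\C}_{4,3}$, and the engine is that $\{1,z^2,z^5\}$ is a Chebyshev system on $[0,1]$: a nontrivial $a+bz^2+cz^5$ has at most two sign changes among its coefficients, hence at most two roots in $(0,1)$ by Descartes' rule of signs (the kind of input used in \citet{bhaskar2014descartes}). This forces $\gamma_3$ to be a strictly convex arc — so $\partial\conv\gamma_3=\gamma_3([0,1])\cup[\gamma_3(0),\gamma_3(1)]$ and no three points of $\gamma_3$ are collinear — and it forces the three columns $m_1,m_2,m_3$ of \eqref{formula1} (the integrals of $(1,z^2,z^5)$ over the complementary intervals $(z_1,1)$, $(z_2,z_1)$, $(0,z_2)$) to be linearly independent with a determinant of fixed sign for every $0<z_2<z_1<1$. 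Fixing $(z_1,z_2)$ and letting $(y_1,y_2,y_3)$ range over the positive orthant therefore sweeps out the relative interior of the nondegenerate triangle whose vertices are the normalizations of $m_1,m_2,m_3$ (each a barycenter of $\gamma_3$ over a subarc); equivalently, by Cramer's rule, a point $p$ with representative $\mathbf c$ is covered by the parameters $(z_1,z_2)$ exactly when $\det[\mathbf c,m_2,m_3]$, $\det[m_1,\mathbf c,m_3]$ and $\det[m_1,m_2,\mathbf c]$ all share the sign of $\det[m_1,m_2,m_3]$. So the proposition reduces to showing these three sign conditions are simultaneously satisfiable over $T=\{0<z_2<z_1<1\}$ for every interior $p$: each determinant, as a function of $(z_1,z_2)$, is a generalized Chebyshev function whose sign changes are governed by the two-root bound, and one tracks how their common-sign region moves inside $T$ and exhausts it as $p$ ranges over $\operatorname{int}\conv\gamma_3$ — mirroring \propref{prop:c42}, where with only two epochs the analogous region is the non-convex set $\operatorname{int}\conv\gamma_1\cup\operatorname{int}\conv\gamma_2$ together with one point, and the added third epoch is what fills the rest of the hull.

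I expect this last reduction — showing the three sign conditions are jointly attainable over $T$ for \emph{every} interior $p$, i.e.\ that three epochs already fill the whole interior while two cannot — to be the main obstacle, and the place where the Chebyshev/Descartes structure must be used most carefully; a naive soft argument (openness of the parametrization plus connectedness of $\operatorname{int}\conv\gamma_3$) does not close by itself, because limiting $3$-epoch configurations with a collapsing epoch or a diverging population size need not have their normalized image on $\partial\conv\gamma_3$.
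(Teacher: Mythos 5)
Your reformulation of the forward inclusion is correct and in fact cleaner than the paper's: writing each column of $M_1(4,k)$ as $\int_{z_j}^{z_{j-1}}(1,z^2,z^5)\,dz$ and expressing the normalized coalescence vector as the barycenter $\int_0^1\rho(z)\gamma_3(z)\,dz$ of a full-support probability measure on the arc immediately gives $\widehat{\C}_{4,k}\subseteq\operatorname{int}\conv\gamma_3$ for every $k$, whereas the paper gets the corresponding containment by computing a Jacobian in \texttt{Macaulay2} and invoking the inverse function theorem. Your observation that $\{1,z^2,z^5\}$ is a Chebyshev system, so that the three columns are independent with a determinant of constant sign for all $0<z_2<z_1<1$, is also correct (integrate the generalized Vandermonde determinant over the product of the three disjoint intervals).

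The problem is that the reverse inclusion $\operatorname{int}\conv\gamma_3\subseteq\widehat{\C}_{4,3}$ --- which is the entire content of the proposition beyond what \propref{prop:c42} already gives --- is not proved. You reduce it to showing that the open triangles $\operatorname{int}\conv\{\hat m_1,\hat m_2,\hat m_3\}$ sweep out all of $\operatorname{int}\conv\gamma_3$ as $(z_1,z_2)$ ranges over $\{0<z_2<z_1<1\}$, i.e.\ that three Cramer determinants can be made simultaneously positive for every interior target point, and then you explicitly defer this step, correctly identifying it as the crux. A reduction to an unverified system of sign conditions is not a proof, and nothing in the Chebyshev structure you cite obviously forces the union of these triangles to exhaust the hull (with only two epochs the analogous union conspicuously fails to be convex, as you note). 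The paper closes exactly this step by an explicit two-case covering argument: it splits $\conv\gamma_3$ into the region spanned by $(1,0,0)$ and the arc $\gamma_2$ (covered by taking $x_2\to 0$, $y_2\to 0$, so the surviving columns sit near $(1,0,0)$ and on $\gamma_2$), and the remaining region between $\gamma_3$ and the chord $[(1/3,1/3,1/3),(1,0,0)]$ (covered by taking $x_2\to 1$, i.e.\ a vanishingly short middle epoch, so the middle column converges to a prescribed point $Q$ of $\gamma_3$ while the unnormalized outer columns sum to a representative of $(2/3,2/9,1/9)$; choosing $y_1=y_3=\rho$, $y_2=1-\rho$ then hits $\rho\,(2/3,2/9,1/9)+(1-\rho)Q$). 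Some explicit construction of this kind --- producing, for each interior $p$, an admissible $(z_1,z_2,y_1,y_2,y_3)$ --- is what your write-up still needs; as it stands the hard half is a stated expectation rather than an argument.
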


As we can see from \propref{prop:c43}, $\widehat{\C}_{4,3}$ is open and convex;
however, we lose one useful property of the normalized map $\widehat{\chi}:\RR^3 \to \widehat{\C}_{4,2}$.  
Specifically, let  $\widehat{\chi}':\RR^2 \to \widehat{\C}_{4,2}$ be given by
$\widehat{\chi}'(x_1,y_1) = \widehat{\chi}(x_1,y_1,1)$, noting
that $\widehat{\chi}(x_1, \lambda y_1, \lambda y_2) = \widehat{\chi}(x_1,y_1,y_2)$ for $\lambda >0$. 
Under this definition $\widehat{\chi}'$ is generically one-to-one. Meanwhile, the 
analogous construction $\widehat{\chi}':\RR^4 \to \widehat{\C}_{4,3}$ mapping
the three-epoch demography with breakpoints $(x_1,x_2)$ and population sizes
$(y_1,y_2,1)$ to the corresponding normalized coalescence vector
has two-dimensional pre-images, generically.
For this reason, contour images do not lend themselves to easy description.
Still, we can at least describe the image of the map on the boundaries of our domain.

The easiest way to visualize the map is first to understand
how the time variables affect the value of the columns of $M_1(4,3)$ and
to view the $y$ variables as specifying points in the convex hull of those
$3$ columns. The boundaries of the square $(x_1,x_2) \in [0,1]\times [0,1]$
map the columns (after rescaling to the simplex) as follows:

\[
\begin{array}{lclclcl}
x_1 = 0 & \mapsto & \left[\begin{array}{c|c|c} 6/9 & 1 & 1 \\ 2/9 & 0 & 0 \\ 1/9 & 0 & 0 \\ \end{array} \right], & \hspace{1cm} &
x_1 = 1 & \mapsto & \left[\begin{array}{c|c|c} 1/3  & |  & | \\ 1/3  & \gamma_2(x_2) & \gamma_1(x_2) \\ 1/3 & | & | \\ \end{array}\right], \\[1cm]
x_2 = 0 & \mapsto & \left[\begin{array}{c|c|c}  |  & | & 1 \\  \gamma_2(x_1) & \gamma_1(x_1) & 0 \\ | & | & 0 \\ \end{array}\right], & &
x_2 = 1 & \mapsto & \left[\begin{array}{c|c|c}  |  & | & | \\  \gamma_2(x_1) & \gamma_3(x_1) & \gamma_1(x_1) \\ | & | & | \\ \end{array}\right]. \\
\end{array}
\]

The case of $x_2 = 1$ is the most interesting:
 when we fix $y_1 = y_3 = 0$ and
$y_2 = 1$, we obtain the boundary curve $\gamma_3(t)$.
Note that $x_2 = 1$ corresponds
to a second epoch of length $0$. The intuition is
that very short population booms at
the second epoch lead to coalescence vectors close to $\gamma_3$.
The maps encoded by a general column of $M_1(4,k)$ correspond
to the interior of the orange region.
Adding in convex combinations of points
gives the lined region, which is the remainder of $\C_{4,3}$; this is
discussed more rigorously in \sref{sec:proofs}.
When the number of epochs $k$
steps higher, all columns of $M_1(4,k)$ still map to
the same region of the simplex,
so $\C_{4,k}$ will still be contained in this convex hull.
The region $\C_{4,3}$ is depicted in Figure~\ref{fig:43}(a).

\begin{figure}[t]
\centerline{\includegraphics[width=\textwidth]{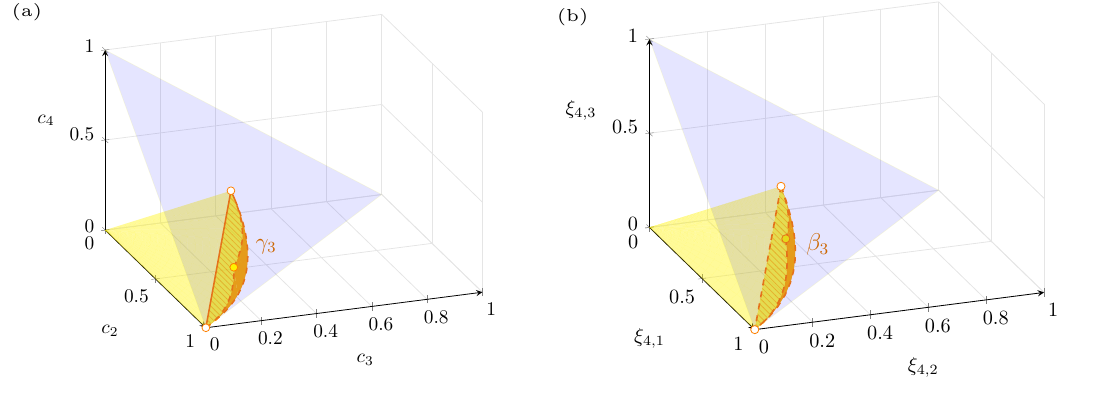}}
\caption{{\bf Coalescence and SFS manifolds for sample size $4$ and
3 population epochs}. (a) The coalescence manifold $\C_{4,3}$ is the entire
yellow and orange region. The 2-simplex, shaded in blue, intersects $\C_{4,2}$
in the normalized coalescence manifold $\widehat{\C}_{4,3}$.
The orange region of $\widehat{\C}_{4,3}$, bounded by
$\gamma_1,\gamma_2$, and $\gamma_3$, is the image of the
surface described by the columns of $M_1(4,3)$, while the yellow region
adds in vectors gained by using linear combinations.
(b) The SFS manifold $\Xi_{4,3}$ is the entire yellow and orange
region. The 2-simplex intersects $\Xi_{4,3}$
in the normalized SFS manifold $\widehat{\Xi}_{4,3}$.
$\Xi_{4,2}$ is obtained from $\C_{4,3}$ by a linear transformation.
The orange region of $\widehat{\Xi}_{4,3}$, bounded by
$\beta_1,\beta_2$, and $\beta_3$, is the image of the
surface described by the columns of $M_1(4,3)$, while the yellow region
adds in vectors gained by using linear combinations.}
\label{fig:43}
\end{figure}

As mentioned earlier, the SFS manifold $\Xi_{n,k}$ is merely a linear transformation
of $\C_{n,k}$; however, since it is of interest in its own right, we include the
formulae for $\Xi_{4,k}$ analogous to those derived in this section.

\begin{prop} The following hold for the normalized $(4,k)$-SFS manifold:
\[ \widehat{\Xi}_{4,1}  =  \left( \dfrac{6}{11}, \dfrac{3}{11},\dfrac{2}{11} \right). \]
$\widehat{\Xi}_{4,2}$ is the union of $\widehat{\Xi}_{4,1}$ with the convex hulls of two curves:
\begin{align*}
\beta_1 & =  \left\{ \left( \dfrac{18 + 10t^2 + 2t^5}{54 + t^5},
 \dfrac{18 - 3t^5}{54 + t^5},
\dfrac{18 -  10t^2 + 2t^5}{54 + t^5} \right) :  0 < t < 1 \right\}, \\
\beta_2 & =  \left\{ \left( \dfrac{18 + 10[2]_t + 2[5]_t}{54 + [5]_t},
\dfrac{18 - 3[5]_t}{54 + [5]_t},
\dfrac{18 - 10[2]_t + 2[5]_t}{54 + [5]_t} \right)  :  0 < t < 1 \right\}.
\end{align*}
Here, also, $[n]_t$ denotes $1 + t + \cdots + t^n$.
Finally, $\widehat{\Xi}_{4,k} = \widehat{\Xi}_{4,3}$ for all $k$, and
$\widehat{\Xi}_{4,3}$ is the convex hull of $\beta_3$, where
\[\beta_3 = \left\{ \left( \dfrac{3 + 5t^2 + 2t^5}{9 + t^5},
 \dfrac{3 - 3t^5}{9 + t^5},
\dfrac{3 - 5t^2 + 2t^5}{9 + t^5} \right): 0 < t < 1 \right\}. \]

\label{prop:sfs}
\end{prop}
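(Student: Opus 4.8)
The plan is to leverage the fact, due to \citet{PK}, that the SFS manifold is a fixed linear image of the coalescence manifold: $\sfsvec_4 = A_4\mathbf{c}$ for an invertible $3\times 3$ matrix $A_4$ that does not depend on the demography. Concretely, the Polanski--Kimmel expansion gives, up to a positive scalar,
\[
A_4 = \begin{pmatrix} 3 & 5 & 2 \\ 3 & 0 & -3 \\ 3 & -5 & 2 \end{pmatrix},
\]
which one can sanity-check against the constant-size case: there $\mathbf{c}\propto(6,2,1)$, so $A_4(6,2,1)^\top=(30,15,10)$, normalizing to $(6/11,3/11,2/11)=\widehat{\Xi}_{4,1}$. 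Since $\det A_4=-150\neq 0$, $A_4$ is a linear automorphism of $\RR^3$, hence carries convex cones to convex cones, curves to curves, face-incidences to face-incidences, and $\C_{4,k}$ onto $\Xi_{4,k}$. From this, two parts of the statement are immediate: $\C_{4,1}$ is a ray (\propref{prop:trivial}), so $\Xi_{4,1}$ is a ray and $\widehat{\Xi}_{4,1}$ a point; and $\C_{4,k}=\C_{4,3}$ for $k\geq 3$ (\propref{prop:c43}) forces $\Xi_{4,k}=\Xi_{4,3}$, hence $\widehat{\Xi}_{4,k}=\widehat{\Xi}_{4,3}$, for all $k\geq 3$.

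Next I would produce the explicit parametrizations by applying $A_4$ to the \emph{unnormalized} representatives $(6,\,2t^2,\,t^5)$, $(6,\,2[2]_t,\,[5]_t)$, and $(1,\,t^2,\,t^5)$ of the curves $\gamma_1,\gamma_2,\gamma_3$ from \propref{prop:c42} and \propref{prop:c43}. For $\gamma_1$ this yields the vector $\big(18+10t^2+2t^5,\ 18-3t^5,\ 18-10t^2+2t^5\big)$; summing the three coordinates the $t^2$-terms cancel, leaving the $\ell_1$-norm $54+t^5$, which gives exactly $\beta_1$. The identical computation for $\gamma_2$ and $\gamma_3$ produces $\beta_2$ (with normalizer $54+[5]_t$) and $\beta_3$ (with normalizer $9+t^5$). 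One also records that each coordinate is strictly positive on $(0,1)$ — the only one not manifestly so, $18-10t^2+2t^5$ (resp.\ $3-5t^2+2t^5$, $18-10[2]_t+2[5]_t$), is decreasing on $(0,1)$ and still positive at $t=1$ — so these curves genuinely lie in the open simplex, as they must since the expected SFS has positive entries.

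Finally, for the convex-hull structure I would argue at the level of cones, which is where a little care is needed, since the passage to the normalized picture via the central projection $v\mapsto v/\|v\|_1$ does not by itself preserve convexity. By \propref{prop:c43}, $\C_{4,3}=\cone(\gamma_3)$, the set of positive combinations of points of $\gamma_3$; applying the linear map $A_4$ and rescaling each ray to its simplex representative $\beta_3(t)$ gives $\Xi_{4,3}=\cone(\beta_3)$. Intersecting with $\Delta^2$ and using that $\|\sum_i\nu_i\beta_3(t_i)\|_1=\sum_i\nu_i$ for $\nu_i>0$ (the $\beta_3(t_i)$ already lying in $\Delta^2$), a positive combination of points of $\beta_3$ lands in $\Delta^2$ exactly when the weights sum to $1$, so $\widehat{\Xi}_{4,3}=\conv(\beta_3)$, with the same ``relative interior / non-closed'' topology inherited from $\widehat{\C}_{4,3}$ through the homeomorphism $A_4$. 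Running the same argument with $\gamma_1,\gamma_2$ in place of $\gamma_3$ gives $\widehat{\Xi}_{4,2}=\widehat{\Xi}_{4,1}\cup\conv(\beta_1)\cup\conv(\beta_2)$, the retained boundary point $\widehat{\Xi}_{4,1}$ being the common $t\to 1$ endpoint of $\beta_1$ and $\beta_2$. The main obstacle is not deep: it is (i) pinning down the explicit form of $A_4$ and confirming invertibility, and (ii) resisting the temptation to work directly in the normalized picture, where the $\ell_1$-normalization step breaks convexity — everything geometric should be done for the cones $\C_{4,k}$, $\Xi_{4,k}$ and only normalized at the very end.
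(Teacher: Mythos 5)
Your proposal is correct and takes essentially the same route as the paper, whose entire proof is ``apply the Polanski--Kimmel linear map $W_4$'' (with $W_4=\frac{2}{5}A_4$, a positive multiple of your matrix) to the coalescence manifolds of \propref{prop:c42} and \propref{prop:c43}; your matrix, the resulting parametrizations of $\beta_1,\beta_2,\beta_3$, and your cone-level handling of convexity under $\ell_1$-normalization all check out. One cosmetic slip: the third coordinates of $\beta_2$ and $\beta_3$ equal $0$ (not positive) at $t=1$, but since $t$ ranges over the open interval $(0,1)$ and those coordinates are strictly decreasing there, positivity on the open curves still holds.
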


Visualizations of $\Xi_{4,2}$ and $\Xi_{4,3}$ may be found in Figure~\ref{fig:42}(b)
and Figure~\ref{fig:43}(b).

\section{The $\Xi_{n,k}$ SFS Manifold: General Properties}
\label{sec:nk}

In this section, we examine the constant $\kappa_n$, defined in \sref{sec:piecewise_constant}
as the smallest index for which $\C_{n,k} \subseteq \C_{n,\kappa_n}$ for all $k$.  The tools
for the proofs in this section come from algebraic geometry (for the derivation of the
lower bound) and convex geometry (for the upper bound).

The gist of the algebraic geometry argument is that, under the $M_2(n,k)$ formulation,
the manifold $\C_{n,k}$ can be seen to be a
relatively open subset of an algebraic variety (manifold)
built by a sequence of well-understood constructions.
Details of this perspective are reserved for the Proofs section.

Two concrete consequences follow from this observation:
\begin{enumerate}
\item the ability to compute all equations satisfied by $\C_{n,k}$
using computer algebra, and
\item a formula for the dimension of the coalescence and SFS manifolds.
\end{enumerate}
While the former is harder to explain without more setup, the latter can
be formulated as follows:

\begin{prop}
The dimension of $\widehat{\C}_{n,k}$
is given by: 
\[\dim {\widehat{\C}_{n,k}} = \begin{cases} 0, & k = 1, \\ \min(2k-2, n-2), & \text{else.} \\  \end{cases} \]
In particular, $\C_{n,k} \subsetneq \C_{n,k+1}$ for $k < \lceil\frac{1}{2}n\rceil$.
\label{prop:cnk}
\end{prop}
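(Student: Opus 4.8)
The plan is to identify the Euclidean dimension of $\C_{n,k}$ with the generic rank of the Jacobian of the parametrization $\chi$, to compute that rank from the formulation \eqref{formula2}, and then to pass to the normalized manifold using the homogeneity $\chi(\vec x,a\vec y)=a\,\chi(\vec x,\vec y)$.

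The case $k=1$ is immediate from \propref{prop:trivial}, so assume $k\ge 2$. In \eqref{formula2} I substitute $u_j=\prod_{i=1}^{j-1}x_i$ (so $u_1=1$ and $1=u_1>u_2>\cdots>u_k>0$ as the $x_i$ range over $(0,1)$) and $z_j=y_j-y_{j-1}$ with $y_0=0$ (so $z_1=y_1>0$ and every partial sum $z_1+\cdots+z_\ell=y_\ell$ is positive). The map $(x_1,\ldots,x_{k-1},y_1,\ldots,y_k)\mapsto(u_2,\ldots,u_k,z_1,\ldots,z_k)$ is a diffeomorphism onto its (open) image, so it suffices to compute the rank of the Jacobian of
\[ F(u_2,\ldots,u_k,z_1,\ldots,z_k)=\sum_{j=1}^{k}z_j\,v(u_j), \qquad v(u):=\Bigl(\tfrac{u^{\binom{m}{2}}}{\binom{m}{2}}\Bigr)_{m=2}^{n}\in\RR^{n-1}. \]
Its columns are $\partial F/\partial z_j=v(u_j)$ for $j=1,\ldots,k$ and $\partial F/\partial u_j=z_j\,v'(u_j)$ for $j=2,\ldots,k$, a total of $2k-1$ vectors. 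The upper bound $\dim\C_{n,k}\le\min(2k-1,n-1)$ is automatic, since $\C_{n,k}\subseteq\RR^{n-1}$ is the image of a $(2k-1)$-parameter family; the content of the proposition is the matching lower bound.

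The heart of the argument is the linear-algebra claim: for distinct nodes $u_1,\ldots,u_k\in(0,\infty)$ and nonzero $z_2,\ldots,z_k$, the $2k-1$ vectors $v(u_1),\ldots,v(u_k),v'(u_2),\ldots,v'(u_k)$ span a subspace of $\RR^{n-1}$ of dimension exactly $\min(2k-1,n-1)$. I would prove this by exhibiting a nonsingular square submatrix of the correct size. If $2k-1\le n-1$, keep only the first $2k-1$ of the monomials $t^{\binom{2}{2}},t^{\binom{3}{2}},\ldots,t^{\binom{n}{2}}$; the corresponding $2k-1$ rows of the matrix with columns $v(u_j),v'(u_j)$ form exactly the confluent (Hermite) Vandermonde matrix attached to the interpolation data ``value at $u_1$; value and first derivative at each of $u_2,\ldots,u_k$'' ($2k-1$ conditions). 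If $2k-1\ge n$, keep all $n-1$ monomials and select from the same pool of data a valid Hermite scheme with exactly $n-1$ conditions (possible because $n\le 2k$ ensures enough available slots). In either case the chosen matrix is singular only if some nonzero real combination of $N$ of the monomials $t^{\binom{m}{2}}$ has $N$ positive zeros counted with multiplicity, which is forbidden by Descartes' rule of signs. Phrased geometrically, this is precisely the non-defectivity of the secant varieties of the irreducible, nondegenerate monomial curve $u\mapsto[v(u)]\subset\PP^{n-2}$ (together with the harmless presence of the pinned point $v(1)$), i.e.\ the ``well-understood constructions'' referred to above.

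Finally, parameter points with all $z_j\ne 0$ exist ($y_j=j$, say), and the nodes $u_j$ are automatically distinct; at such a point the Jacobian of $\chi$ has rank $\min(2k-1,n-1)$, so $\dim\C_{n,k}\ge\min(2k-1,n-1)$, hence equality. Since the positive scaling action of $\RR_{>0}$ on the cone $\C_{n,k}$ is free with quotient $\widehat{\C}_{n,k}=\C_{n,k}\cap\Delta^{n-2}$, we get $\dim\widehat{\C}_{n,k}=\dim\C_{n,k}-1=\min(2k-2,n-2)$, and $0$ when $k=1$. For $k<\lceil\tfrac{1}{2}n\rceil$ one checks $2k-2<n-2$, so $\dim\widehat{\C}_{n,k}<\dim\widehat{\C}_{n,k+1}$; combined with $\C_{n,k}\subseteq\C_{n,k+1}$ (every $k$-epoch history is a $(k+1)$-epoch history), the inclusion is strict, which is the ``in particular'' clause. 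I expect the main obstacle to be the bookkeeping forced by the pinned first breakpoint $u_1=1$: the parametrization supplies a ``value'' but not a ``derivative'' datum at $u_1$, so one must check carefully --- separately in the regimes $2k-1\le n-1$ and $2k-1\ge n$ --- that the remaining Hermite data still impose the full $\min(2k-1,n-1)$ independent conditions, and that producing a single regular point of $\chi$ genuinely pins down the Euclidean dimension of the image rather than merely bounding it.
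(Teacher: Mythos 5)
Your proof is correct, but it reaches the dimension formula by a genuinely different route than the paper. The paper works projectively: it shows (Lemma~\ref{lem:join}) that the Zariski closure of $\C_{n,k}$ is the affine cone over $\join(\sigma_{k-2}(C_n),p_n)$, where $C_n$ is a toric transformation of a coordinate projection of the rational normal curve, and then simply quotes the classical fact that secant varieties of irreducible nondegenerate curves are never defective ($\dim \sigma_i = \min(2i+1,n)$, Harris Ex.~16.16), finishing with bookkeeping for the join, the affine cone, and the normalization. You instead compute the generic rank of the Jacobian of the parametrization in the $(u_j,z_j)$ coordinates and reduce to the nonsingularity of a confluent Vandermonde matrix in the monomials $t^{\binom{m}{2}}$, proved via the generalized Descartes rule of signs (equivalently, these monomials form an extended Chebyshev system on $(0,\infty)$). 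The two arguments are dual to one another: by Terracini's lemma, the tangent space to $\join(\sigma_{k-2}(C_n),p_n)$ at a generic point is exactly the span of your vectors $v(u_1),\dots,v(u_k),v'(u_2),\dots,v'(u_k)$, so your Hermite--Vandermonde computation is in effect a from-scratch proof of the non-defectivity statement the paper imports, specialized to this curve and with the pinned node $u_1=1$ playing the role of the joined point $p_n$ (a value condition with no derivative slot). What the paper's formulation buys beyond the dimension count is the structural identification of the variety itself --- irreducibility, degree, and the ability to compute its defining equations by computer algebra, as exploited in the $n=5$ example --- which your local rank argument does not provide; what your argument buys is self-containedness and an elementary certificate (a single explicit regular point such as $y_j=j$) in place of the secant-variety machinery. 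Your flagged worries are handled correctly: in the oversaturated regime $2k-1\ge n$ you rightly insist on selecting a valid Hermite (not Birkhoff) subscheme, and the passage from one maximal-rank point to the semialgebraic dimension of the image is legitimate since the rank is lower semicontinuous and the polynomial image of a $(2k-1)$-dimensional semialgebraic set has dimension at most $2k-1$.
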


We will illustrate how these algebraic ideas can be applied
in the next case we have not seen, namely to the sample size $n=5$.

\begin{ex}
Note that $\widehat{\C}_{5,1} = \left(\frac{30}{48},\frac{10}{48},\frac{5}{48},
\frac{3}{48}\right)$, by \propref{prop:trivial}.
We will use the new ideas above to describe $\widehat{\C}_{5,k}$ for higher values of $k$.

Since the normalized coalescence manifold has dimension $\min(2k-2,n-2)$,
we know that $\widehat{\C}_{5,2}$ has dimension $2$ inside of the $3$-simplex;
therefore, we anticipate that it will satisfy
{\bf one} equation, matching its codimension.
The degree of the algebraic variety implies that this
polynomial should have degree $8$.
Indeed, when we compute this equation using
{\tt Macaulay2} \citep{M2}, we obtain a
huge degree-8 polynomial with 105 terms, whose
largest integer coefficient is $5,598,720$.

Finally, $\widehat{\C}_{5,3}$ is full-dimensional in the $3$-simplex, so it will
satisfy no algebraic equations relative to the simplex.
It would be defined instead by the inequalities
determining its boundary.
\end{ex}
While \propref{prop:cnk} is useful for analyzing 
individual coalescence manifolds,
it also leads to the observation that $\kappa_n 
\geq \lceil\frac{1}{2}n\rceil$, since the inclusions are
proper until that index.
It is worth remarking that a slightly weaker lower bound of $\kappa_n \geq
\lfloor\frac{1}{2}n\rfloor$ follows immediately 
from the identifiability result
of \citet[Corollary 7]{bhaskar2014descartes}, which states that for 
a piecewise-constant population
size function with $k$ pieces, the expected SFS 
of a sample of size $n \geq 2k$ 
suffices to uniquely identify the function.

The convex geometry argument is more elementary. As we noted,
the $M_1$ formulation is contained in the convex hull over
the surface described by a general column of $M_1$.  Because
the columns are related, our selection of points in
the surface is not unrestricted. For this reason, it is not obviously
\emph{equal} to the convex hull. However, once
we fix some collection of values $x_1,\ldots,x_k$
for $\C_{n,k}$, we can use convex geometry for the resulting polytope.
In particular, we use {Caratheodory's Theorem} (\citet{Cara} or
\citet[Theorem 2.3]{Barvinok}), which states that for 
$X$ a subset of $\RR^n$, every $x \in \cone(X)$ can 
be represented as a positive combination of
vectors $x_1,\ldots,x_m \in X$ for some $m \leq n$.

The argument, roughly, allows us to construct any point in that
convex hull, with as few as $n+1$ points. This allows
us to place the point in $\C_{n,j}$ for $j\leq 2n-1$. Since no new
SFS are generated by using more than $2n-1$ epochs, we learn that
 $\kappa_n$ is bounded above by $2n-1$.

Combining the two bounds obtained in this section, we have the
following: 
\begin{thm}
\label{thm:kappa_bounds}
For any integer $n \geq 2$, there exists a positive
integer $\kappa_n$ such that
$\Xi_{n,k} \subseteq \Xi_{n,\kappa_n}$ for all $k \geq 1$. Furthermore, 
$\kappa_n$ satisfies
\[
\lceil n/2 \rceil \: \leq \kappa_n \leq \: 2n -1.
\]
\end{thm}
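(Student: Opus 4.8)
The plan is to establish the two inequalities separately, reading off the existence of $\kappa_n$ from the upper-bound construction. Throughout I would work with the coalescence manifolds rather than the SFS manifolds: since $\Xi_{n,k}=A_n\C_{n,k}$ for the fixed invertible matrix $A_n$ of \citet{PK}, one has $\Xi_{n,k}\subseteq\Xi_{n,j}$ if and only if $\C_{n,k}\subseteq\C_{n,j}$. I would also record the elementary fact that $\Pi_j\subseteq\Pi_{j'}$ whenever $j\le j'$ (subdivide an epoch into two epochs of the same population size), so that $\C_{n,1}\subseteq\C_{n,2}\subseteq\cdots$; hence it suffices to exhibit one index $\kappa$ with $\C_{n,k}\subseteq\C_{n,\kappa}$ for every $k$, and then take $\kappa_n$ to be the least such index. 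The lower bound $\kappa_n\ge\lceil n/2\rceil$ is then immediate from \propref{prop:cnk}: the dimensions $\dim\widehat\C_{n,k}$ strictly increase as $k$ runs from $1$ to $\lceil n/2\rceil$ (because $2k-2<n-2$ for $k<\lceil n/2\rceil$), so $\C_{n,k}\subsetneq\C_{n,k+1}$ in that range, and no index below $\lceil n/2\rceil$ can absorb all of the $\C_{n,k}$.

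For the upper bound I would use the $M_1$ picture of \propref{prop:2formulas}. Let $\mathbf h(a)\in\RR^{n-1}$ be the vector with $m$th coordinate $a^{\binom m2}/\binom m2$ (so $\mathbf h(0)=0$), and let $S=\{\mathbf h(a)-\mathbf h(a'):1\ge a>a'\ge 0\}$, the two-parameter surface swept out by the columns of $M_1(n,k)$: a column with base $a_j=\prod_{i<j}x_i$ and ratio $x_j$ equals $\mathbf h(a_j)-\mathbf h(a_jx_j)=\mathbf h(a_j)-\mathbf h(a_{j+1})$, and \eqref{formula1} says every $\mathbf c\in\C_{n,k}$ is a strictly positive combination of $k$ such points, with $a_1=1$ and $a_{k+1}=0$. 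Hence $\bigcup_k\C_{n,k}\subseteq\cone(S)$. Since $S\subseteq\RR^{n-1}$, Carathéodory's theorem lets us rewrite any such $\mathbf c$ as $\sum_{i=1}^{m}\mu_i\,(\mathbf h(a_i)-\mathbf h(a_i'))$ with $\mu_i>0$ and $m\le n-1$. Now adjoin $0$ and $1$ to the finite set $\{a_i\}\cup\{a_i'\}$ and list it in decreasing order as $1=s_0>s_1>\cdots>s_L=0$, so that $L+1\le 2(n-1)+2=2n$. Because $\mathbf h$ telescopes, $\mathbf h(a_i)-\mathbf h(a_i')=\sum_\ell\big(\mathbf h(s_\ell)-\mathbf h(s_{\ell+1})\big)$ over the sub-intervals $(s_{\ell+1},s_\ell)\subseteq(a_i',a_i)$, so collecting terms gives $\mathbf c=\sum_{\ell=0}^{L-1}\Lambda_\ell\big(\mathbf h(s_\ell)-\mathbf h(s_{\ell+1})\big)$ with $\Lambda_\ell=\sum_{i:\,(s_{\ell+1},s_\ell)\subseteq(a_i',a_i)}\mu_i\ge 0$. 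This is precisely the $M_1$ representation, via \eqref{formula1}, of the coalescence vector of the piecewise-constant demography whose transformed breakpoints are $x_\ell=s_\ell/s_{\ell-1}\in(0,1)$ and whose epoch sizes are the $\Lambda_\ell$; it has $L\le 2n-1$ epochs. Hence $\C_{n,k}\subseteq\C_{n,2n-1}$ for all $k$, so $\kappa_n$ exists and $\kappa_n\le 2n-1$, which together with the lower bound proves the theorem.

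The step I expect to require the most care is verifying that this reconstructed demography is admissible, i.e. that every $\Lambda_\ell$ is strictly positive — equivalently, that the intervals $(a_i',a_i)$ returned by Carathéodory cover all of $(0,1)$. A priori the reduced representation could leave an interior sub-interval uncovered, in which case the naive reconstruction would demand a zero population size, which $\Pi_k$ forbids. I would resolve this either by showing coverage is forced (a genuine $\mathbf c\in\C_{n,k}$ arises from a demography that is positive on all of $(0,1)$, and one argues it cannot be rewritten as a cone combination of columns avoiding some scale, using the sign/Tchebyshev structure of the functions $a\mapsto a^{\binom m2}$), or, failing that, by a limiting argument: apply the construction at a point in the relative interior of $\C_{n,k}$ close to $\mathbf c$ to obtain a nondegenerate $(2n-1)$-epoch demography there, and then pass to $\mathbf c$ itself. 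A secondary, routine point is that the reconstruction must honor the chaining constraints $a_1=1$, $a_{j+1}=a_jx_j$ built into \eqref{formula1}; this is automatic once $0$ and $1$ are included among the $s_\ell$, since the successive breakpoints $s_\ell$ are then exactly the partial products $\prod_{i<\ell}x_i$.
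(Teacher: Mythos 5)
Your proof follows essentially the same route as the paper's: the lower bound is read off from \propref{prop:cnk} exactly as in the text, and the upper bound is the same Carath\'eodory argument applied to the cone generated by the columns of $M_1$. The only presentational difference is in the merging step: you apply Carath\'eodory to the full surface $S$ and then sort the endpoints $\{a_i\}\cup\{a_i'\}\cup\{0,1\}$ and telescope $\mathbf h$, whereas the paper applies it to the $q$ original columns, passes to the \eqref{formula2} representation, deletes the columns whose difference coordinates vanish, and recovers the epoch sizes as partial sums; these are the same bookkeeping and give the same count of at most $2n-1$ epochs. The issue you flag at the end --- that some of the resulting weights $\Lambda_\ell$ may be zero, so that the reconstructed demography can fall outside $\Pi_{2n-1}$ as defined (which requires strictly positive population sizes) --- is genuine, and it is worth knowing that the paper's own proof does not address it either: the partial sums $y''_1,\ldots,y''_{2n-1}$ it produces are likewise only nonnegative (for instance $y''_1=0$ whenever Carath\'eodory drops the first column), so the published argument as written only places $\mathbf c$ in a set where zero-size epochs are tolerated, absent the kind of perturbation, interior-point, or sign-structure argument you sketch. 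Your instinct here is sound, and carrying out either of your proposed fixes would make the argument strictly more complete than the one in the paper; as it stands, your write-up is at least as rigorous as the published proof.
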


This allows us to express the SFS from any piecewise-constant 
demography as coming from a demography with relatively few epochs.
Because the SFS is an integral over the demography, the SFS from
a general measurable demography can be uniformly approximated by
a piecewise-constant demography with sufficiently many epochs.
Our results imply that it can be precisely obtained by a demography
with at most $2n-1$ epochs.

\section{Implications for Statistical Demographic Inference from Data}
\label{sec:ml_inference}

The SFS data used for demographic inference in population genomic studies are
noisy observations of the expected SFS from the underlying population
demography. Finite sequence lengths, ancestral/derived allele confounding, and
sequencing and variant calling errors are some common reasons for the empirical
SFS observed in sequencing studies differing substantially from the expected SFS
for the underlying demographic model. It is thus possible that the empirical SFS
observed in a sequencing study is not contained in the space of expected SFS
$\Xi_{n,\kappa_n}$ for any demographic model. Commonly used demographic
inference methods such as \dadi{} \citep{gutenkunst:2009}, \fastsimcoal{}
\citep{excoffier:2013}, and \fastNeutrino{} \citep{bhaskar:2015} perform
parametric demographic inference by searching for demographies which maximize
the likelihood of the observed SFS $\widehat{\sfsvec}'_n$. Under the widely used Poisson
Random Field model which assumes that the genomic sites being analyzed are
unlinked \citep{sawyer:1992}, maximizing the likelihood is equivalent to
minimizing the KL divergence between the empirical SFS and the expected SFS
under the parametric demographic model. Given an observed SFS $\widehat{\sfsvec}'_n$,
these algorithms traverse the interior of some user-specified space of
parametric population size functions such as $\Pi_k$, while computing the expected SFS $\widehat{\sfsvec}_n$
under the forward map $\chi(\vec{x},\vec{y})$ in each optimization iteration.
The optimization procedure either terminates and returns a demography $\eta$ in
the search space $\Pi_k$ whose expected SFS $\widehat{\sfsvec}_n(\eta)$ minimizes the KL
divergence $\text{KL}(\widehat{\sfsvec}'_n \,\|\,  \widehat{\sfsvec}_n(\eta))$ among all
demographies in $\Pi_k$, or it exhibits runaway behavior where some population
sizes or epochs diverge to infinity or go to 0 with successive optimization iterations.

Our geometric study of the $\widehat{\Xi}_{n,k}$ SFS manifold can clearly explain the
success and failure modes of these optimization algorithms. Suppose the
demographic search space is $\Pi_k$. When the observed SFS $\widehat{\sfsvec}'_n$ lies in the
interior of the $\widehat{\Xi}_{n,k}$ SFS manifold, this observed SFS is also exactly
equal to the expected SFS of some demographic model $\eta^* \in \Pi_k$, and
hence any of the above mentioned optimization algorithms, barring numerical
difficulties,\footnote{The package \dadi{} uses numerical methods to approximate
the solution to a diffusion PDE, while \fastsimcoal{} uses coalescent
simulations to estimate the expected SFS for a given demographic model. Hence,
these software packages might have numerical issues beyond the failure modes we
consider here. For this reason, we conduct our inference experiments in this section
using the \fastNeutrino{} package, which uses the analytic results of \citet{PK} for exact computation of the
expected SFS for piecewise-constant population size functions.} should be able
to find this demography $\eta^*$ whose expected SFS $\widehat{\sfsvec}_n(\eta^*)$ is
exactly equal to the observed data $\widehat{\sfsvec}'_n$ and $\text{KL}(\widehat{\sfsvec}'_n
\,\|\,  \widehat{\sfsvec}_n(\eta^*)) = 0$. On the other hand, if the noise in the
observed SFS causes it to lie outside the $\widehat{\Xi}_{n,k}$ SFS manifold, these
optimization algorithms will attempt to find the demography $\eta(t) \in \Pi_k$
which minimizes the projection under the KL divergence of the observed SFS
$\widehat{\sfsvec}'_n$ onto the $\widehat{\Xi}_{n,k}$ SFS manifold,
\begin{align}
\eta^{*} = \argmin_{\eta \in \Pi_k} ~\text{KL}(\widehat{\sfsvec}'_n  \,\|\, \widehat{\sfsvec}_n(\eta)). \label{eq:kl_proj}
\end{align}

The optimization problem in \eqref{eq:kl_proj} has a couple of issues. As
described in Sections \ref{sec:4k} and \ref{sec:nk}, the SFS manifold
$\widehat{\Xi}_{n,k}$ is a set where most of the boundary points are not contained in
$\widehat{\Xi}_{n,k}$. For example, the description of $\widehat{\Xi}_{4,2}$ in \sref{sec:4k} showed
that the only boundary point of $\widehat{\Xi}_{4,2}$ contained in $\widehat{\Xi}_{4,2}$ is the expected SFS
corresponding to the constant population size function, namely the point $(6/11,
3/11, 2/11)$. Since the points on the boundary of the $\widehat{\Xi}_{n,k}$ correspond to
different limiting regimes with the epoch durations and population sizes tending
to 0 or $\infty$, commonly used demographic inference algorithms that attempt to
solve the optimization problem in \eqref{eq:kl_proj} experience runaway behavior
when the observed SFS lies outside the $\widehat{\Xi}_{n,k}$ SFS manifold.
\fref{fig:constant_demography_projection} shows the SFS vectors of simulated
sequences (blue circles) under the coalescent with a constant population size, where most of
the simulated SFS fall outside the $\widehat{\Xi}_{4,2}$ manifold. 
We used \fastNeutrino{}
to fit two-epoch piecewise-constant demographies to these simulated SFS. The
observed SFS vectors which project onto the curved boundary of the upper convex
set are inferred to come from a demography where both the population size and
the duration of the recent epoch, $y_1$ and $t_1$ respectively, go to 0, while
for the observed SFS projecting onto the curved boundary of the lower convex
set, $y_1$ and $t_1$ diverge to infinity. In both cases, the location of the
projection along the curved boundaries is determined by the value of $x_1 = \exp(- t_1
/ y_1)$ where $x_1 \in (0, 1)$. On the other hand, for the observed SFS which
project onto the straight line boundaries of the upper and lower convex sets,
the inferred recent epoch durations go to 0 (upper convex set) or diverge to 
infinity (lower convex set) while the inferred
recent population size $y_1$ is a constant relative to the ancient population size $y_2$ and
the location of the projection along the boundaries is determined by $y_1/y_2$.

\begin{figure}[t]
\centerline{\includegraphics[width=0.95\textwidth]{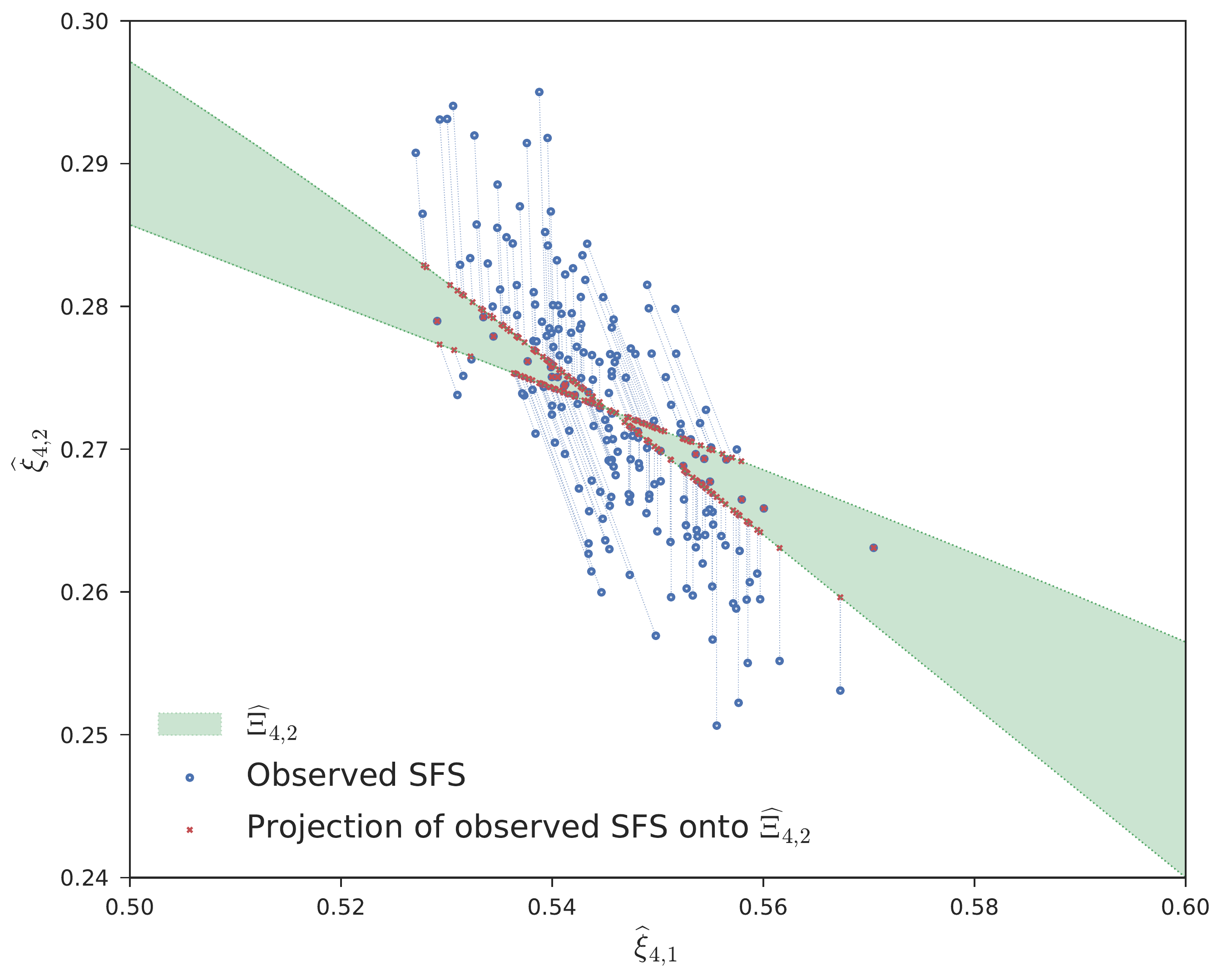}}
\caption{
Each blue circle is the observed SFS of $n=4$ haplotypes simulated using
\msprime{} \citep{kelleher:2016} under a constant population size coalescent
with recombination using realistic mutation and recombination rates of
$10^{-8}$ mutations and $2.2\times10^{-8}$ crossovers per basepair per
generation per haploid.  Each sequence has 1000 unlinked loci of length 10 kb
each, resulting in an average of 7{,}300 segregating sites. The red crosses
are the expected SFS at the two-epoch piecewise-constant demographies inferred
for these simulated SFS using \fastNeutrino{}; the red crosses are the
projections of the observed SFS onto the closure of $\widehat{\Xi}_{4,2}$ using the KL
divergence, with the dotted blue lines showing the correspondence between the
observed SFS and their projections.  For observed SFS lying in the
interior of $\widehat{\Xi}_{4,2}$, the observed SFS and their projections coincide,
while the observed SFS lying outside $\widehat{\Xi}_{4,2}$ project onto the boundaries of 
one of the two convex sets that form $\widehat{\Xi}_{4,2}$.
}
\label{fig:constant_demography_projection}
\end{figure}

A second more subtle issue arises from the fact that the $\widehat{\Xi}_{n,k}$ SFS
manifold being projected onto in \eqref{eq:kl_proj} may be a non-convex set, and
hence the solution to the optimization problem in \eqref{eq:kl_proj} may not be
unique.\footnote{If $\widehat{\Xi}_{n,k}$ is a convex set, the solution to
\eqref{eq:kl_proj} is unique due to the fact that the KL divergence is a convex
function of either argument. Namely, for $\widehat{\sfsvec}^{(1)}_n, \widehat{\sfsvec}^{(2)}_n \in
\widehat{\Xi}_{n,k}$, $\text{KL}(\widehat{\sfsvec}'_n \,\|\,  \lambda\widehat{\sfsvec}^{(1)}_n +
(1-\lambda)\widehat{\sfsvec}^{(2)}_n) \leq \lambda\,\text{KL}(\widehat{\sfsvec}'_n \,\|\, 
\widehat{\sfsvec}^{(1)}_n) + (1-\lambda)\,\text{KL}(\widehat{\sfsvec}'_n \,\|\,  \widehat{\sfsvec}^{(2)}_n)$, with
equality holding if and only if $\widehat{\sfsvec}^{(1)}_n = \widehat{\sfsvec}^{(2)}_n$.} For
example, we already observed in \sref{sec:4k} that the set $\widehat{\Xi}_{4,2}$ is non-convex 
and is given by the union of two convex sets. Hence, for $n=4$ and $k=2$, and
for some values of the observed SFS $\widehat{\sfsvec}'_4$, there could exist multiple
different demographic models $\eta_1, \eta_2 \in \Pi_2$, $\eta_1 \neq \eta_2$,
such that
\begin{align}
\inf_{\eta \in \Pi_2} \text{KL}(\widehat{\sfsvec}'_4 \,\|\,  \widehat{\sfsvec}_4(\eta)) = 
\text{KL}(\widehat{\sfsvec}'_4 \,\|\,  \widehat{\sfsvec}_4(\eta_1)) = 
\text{KL}(\widehat{\sfsvec}'_4 \,\|\,  \widehat{\sfsvec}_4(\eta_2)).
\end{align}
By algebraic considerations, the observed SFS $\widehat{\sfsvec}'_4$ which have such non-unique 
projections onto $\widehat{\Xi}_{4,2}$ form a set of measure zero among all possible
probability vectors on three elements, and hence such SFS are unlikely to be
encountered in real data. However, the existence of such SFS vectors
$\widehat{\sfsvec}'_4$ with non-unique projections implies that slight perturbations to
these vectors, say due to different quality control procedures for selecting the
set of genomic sites to analyze, could result in very different demographic
models being inferred due to the projection of the perturbed vector occurring
onto one or the other of the two convex sets composing $\widehat{\Xi}_{4,2}$. This is also
apparent in \fref{fig:constant_demography_projection}, where several pairs of observed SFS vectors that 
are very close to each other project onto the different convex sets forming $\widehat{\Xi}_{4,2}$.
As described in the previous paragraph, the boundaries of the upper and lower convex sets represent
various limiting regimes with either vanishingly small or arbitrarily large
recent population sizes and epoch durations, and this shows that even minor
perturbations to the SFS vector could yield qualitatively very different
inference results which cannot be reliably interpreted.

\section{Discussion}
\label{sec:discussion}

In this work, we characterized the manifold of expected SFS $\Xi_{n,k}$
generated by piecewise-constant population histories with $k$ epochs, while
giving a complete geometric description of this manifold for the sample size
$n=4$ and $k=2$ epochs. This special case is already rich enough to shed light
on the issues that practitioners can face when inferring population demographies
from SFS data using popular software programs.  While we demonstrated these
issues in \sref{sec:ml_inference} using the \fastNeutrino{} program, the issues
we point out are \emph{inherent} to the geometry of the SFS manifold and not
specific to any particular demographic inference software. Our simulations
showed that the demographic inference problem from SFS data can be fraught with
interpretability issues, due to the sensitivity of the inferred demographies to
small changes in the observed SFS data. These results can also be viewed as
complementary to recent pessimistic minimax bounds on the number of segregating
sites required to reliably infer ancient population size histories
\citep{terhorst:2015}.

Our investigation of piecewise-constant population histories also let us show a
general result that the expected SFS for a sample of size $n$ under \emph{any
population history} can also be generated by a piecewise-constant population
history with at most $2n-1$ epochs. This result could have potential
applications for developing non-parametric statistical tests of neutrality. Most
existing tests of neutrality using classical population genetic statistics such
as Tajima's $D$ \citep{tajima:1989} implicitly test the null hypothesis of
selective neutrality \emph{and} a constant effective population size
\citep{stajich:2004}. Exploiting our result characterizing the expected SFS of
samples of size $n$ under arbitrary population histories in terms of the
expected SFS under piecewise-constant population histories with at most
$\kappa_n$ epochs, we see that the KL divergence of an observed SFS $\sfsvec'_n$
to the expected SFS $\sfsvec_n(\eta^*)$ under the best fitting piecewise
constant population history $\eta^* \in \Pi_{\kappa_n}$ with at most $\kappa_n
\leq 2n-1$ epochs is also equal (up to a constant shift) to the negative log-likelihood 
of the observed SFS $\sfsvec'_n$ under the best fitting population
size history without any constraints on its form, assuming the commonly used
Poisson Random Field model where the sites being analyzed are unlinked. One can
then use the KL divergence inferred by existing parametric demographic inference
programs to create rejection regions for the null hypothesis of selective
neutrality without having to make any parametric assumption on the underlying
demography. Such an approach would also obviate the need for interpreting the
inferred demography itself, since the space of piecewise-constant population
histories is only being used to compute the best possible log-likelihood under
any single population demographic model. This approach could serve as an
alternative to recent works which first estimate a parametric demography using
genome-wide sites, and then perform a hypothesis test in each genomic region using
simulated distributions of SFS statistics like Tajima's $D$ under the inferred
demography \citep{rafajlovic:2014}. We leave the exploration of such tests for 
future work.

\section{Proofs}
\label{sec:proofs}

\subsection*{Proof of \propref{prop:2formulas}}
First, we reduce the integral expression for $c_m$ to a finite sum;
then we make appropriate manipulations until we arrive at the desired expressions.

Coalescence in the Wright-Fisher model is an inhomogeneous Poisson process
with parameter $\binom{m}{2}/\eta(t)$.
Therefore, the probability density of first coalescence at time $T$ is:
\begin{align*}
 \mathbb{P}(\text{No Coalescence in  } [0,T)) 
\mathbb{P}(\text{Coalescence at time  }T)
=  \exp\bigg[-\displaystyle \int_0^T \dfrac{\binom{m}{2}}{\eta(t)} \mathrm{d}t\bigg]
 \dfrac{\binom{m}{2}}{\eta(T)} \mathrm{dt}. 
\end{align*}
Let $R_{\eta}(t) = \displaystyle \int_0^T \dfrac{1}{\eta(t)} \mathrm{dt}$.
To compute the expected time to first coalescence, we have the integral:
\[\begin{array}{cclr} c_m & = & \displaystyle
\int_0^\infty t \cdot \dfrac{\binom{m}{2}}{\eta(t)}
\exp \left[- \binom{m}{2} R_{\eta(t)} \right] \mathrm{dt} &  \\[5mm]
& = & \displaystyle \int_0^\infty \exp \left[ - \binom{m}{2} R_{\eta}(t) \right] \mathrm{dt} & \text{(Integration by Parts)}
\end{array} \]
Substituting variables, $\tau = R_\eta(t)$, note that $ \mathrm{d} \tau =
\eta(R^{-1}(\tau))\mathrm{d}\tau$. Therefore, the integral becomes:
\[ c_m  = \displaystyle \int_0^\infty \tilde{\eta}(\tau)\exp \left[ - \binom{m}{2} \tau \right] \mathrm{dt}, \] where $\tilde{\eta}(\tau) = \eta(R^{-1}(\tau))$.

The population size $\eta(t)$ is a piecewise constant
function, whose value $\eta(t) = \eta_j$ if $t_{j-1} \leq t < t_{j}$.
As specified in the Proposition, $t_0 = 0$, $t_{k} = \infty$, and  $(y_1,\ldots,y_k)$
is the vector of population sizes.
Observe that $\tilde{\eta}(\tau)$ is also piecewise constant.
In particular,
\[ \tilde{\eta}(\tau) = \begin{cases} y_1, & 0 \leq \tau < \dfrac{t_1}{y_1}, \\ y_2, & \dfrac{t_1}{y_1} \leq \tau < \dfrac{t_1}{y_1} + \dfrac{t_2 - t_1}{y_2},  \\ \vdots & \vdots
\end{cases}
\]
Let $s_j = t_j - t_{j-1}$ for brevity. The resulting formula is:
\[\tilde{\eta}(\tau) = y_j,  \hspace{5mm} \text{   for     }  \sum_{k=1}^{j-1} \frac{s_k}{y_k} \leq \tau < \sum_{k=1}^j \frac{s_k}{y_k}. \]

 We turn the integral into a sum of integrals on
the constant epochs:
\[\begin{array}{lcl} c_m &  = & \displaystyle \int_0^\infty \tilde{y}(\tau)\exp \left[ - \binom{m}{2} \tau \right] \mathrm{d}\tau \\
&  = & \sum_{j = 1}^k \displaystyle \int_{ \sum^{j-1} s_l/y_l} ^{\sum^{j} s_l/y_l} y_j \exp \left[ - \binom{m}{2} \tau \right] \mathrm{d}\tau \\[5mm]
& = &  \sum_{j = 1}^k  y_j \left[ \dfrac{-1}{\binom{m}{2}} \exp \left[ - \binom{m}{2} \tau \right] \right]^ {\tau = \sum^{j} s_l/y_l}_{\tau = \sum^{j-1} s_l/y_l} \\[8mm]
& = & \dfrac{1}{\binom{m}{2}} \left\{\sum_{j=1}^k y_j \left( \exp \left[ - \binom{m}{2} \sum^j s_l / y_l \right] - \exp  \left[ - \binom{m}{2}\sum^{j-1} s_l /y_l \right]\right) \right\} \\[4mm]
& = & \dfrac{1}{\binom{m}{2}} \left\{\displaystyle\sum_{j=1}^k y_j \left(
\displaystyle\prod_{l = 1}^{j-1}\exp \left[-\binom{m}{2}s_l/y_l\right] \right) \left( 1 -
\exp \left[-\binom{m}{2}s_j/y_j\right]\right)\right\}.
\end{array}\]
We now make the substitution $x_j = \exp \left[- s_j / y_j\right] $.
Note that the old restriction $t_{j+1} > t_j > 0$ becomes
the new constraint $0 < x_j < 1$.
Our formula for the $c_m$ is now:
\[c_m  = \dfrac{1}{\binom{m}{2}}
\left[\sum_{j=1}^k y_j \left( \prod_{l = 1}^{j-1}x_l^{\binom{m}{2}} \right)
\left( 1 - x_j^{\binom{m}{2}}\right)\right].\]
Noting the linear form of this expression, we factor as a matrix
multiplication:
\[\left[\begin{array}{cccc} 1 & & &  \\ & \frac{1}{3} & & \\ & & \ddots & \\
& & & \frac{1}{\binom{n}{2}} \end{array} \right]
\left[\begin{array}{cccc} 1 & x_1 & \ldots & \prod_{i=1}^{k-1} x_i  \\
1 & x_1^3 & \cdots &  \prod_{i=1}^{k-1} x_i^3  \\
\vdots & \vdots & \ddots & \vdots \\
1 & x_1^{\binom{n}{2}} & \cdots &  \prod_{i=1}^{k-1} x_i^{\binom{n}{2}} \\
\end{array} \right]
\left[\begin{array}{ccccc} 
1 & 0 & 0 &  \cdots & 0 \\ 
-1 & 1 & 0 & \ddots & 0  \\
0 & -1 & 1 & \ddots & \vdots \\ 
\vdots  & \ddots  & \ddots & \ddots & 0 \\
0 & \cdots & 0 & - 1 & 1 \end{array} \right] \left[ \begin{array}{c}
y_1 \\ \vdots \\ y_k \end{array} \right].
\]
Combining the first three matrices yields \eqref{formula1}; combining the first two
and last two separately yields \eqref{formula2}. \qed

\subsection*{Proof of \propref{prop:trivial}}
We justify each equation in turn:
\begin{enumerate}
\item As mentioned in the introduction, this
is a classical result in population genetics,
and can be derived directly from \eqref{formula2}.
\item The inclusion $\C_{2,1} \subset \C_{2,k}$ is immediate,
so we need only show that any $a \in \C_{2,k}$ satisfies
$a > 0$. Using \eqref{formula1}, $a$ is written as a sum of
products of strictly positive numbers; so $\C_{2,k} \subset \C_{2,1}$.
\item First, we show that $\C_{3,2}$ is the interior of the open cone
spanned by $(1,0)$ and $(1,1)$. Fix $y_1 = a/(1-x_1)$ (for $a$ positive)
and consider $\chi(x_1,a/(1-x_1),y_2)$:
\medmuskip=1mu
\thinmuskip=1mu
\thickmuskip=1mu
\[ \chi(x_1,a/(1-x_1),y_2)= \left[ \begin{array}{c}
a + x_1 y_2 \\ \frac{1}{3} a(1 + x_1 + x_1^2) + \frac{1}{3}x_1^3 y_2
\end{array} \right] =  a \left[ \begin{array}{c}
1 \\ \frac{1}{3}(1 + x_1 + x_1^2)\end{array} \right] + x_1 y_2 \left[ \begin{array}{c}
1 \\ \frac{1}{3}x_1^2
\end{array} \right] .\]
When $x_1 \to 0$, the second vector approaches $(1,0)$; when $x_1 \to 1$, the
first vector approaches $(1,1)$.
The vectors are in the interior of that cone for all other
permissible values of $x_1$ and $y_2$.
To show that $\C_{3,k} = \C_{3,2}$, note that for
larger values of $k$, the same cone of
vectors are produced. In particular,
 $\chi(x_1,\ldots,x_{k-1},y_1,\ldots,y_k)$ yields
\medmuskip=-1mu
\thinmuskip=-1mu
\thickmuskip=0mu
\nulldelimiterspace=0pt
\scriptspace=1pt
\[
 \mathlarger{\sum_{j=1}^{k-1}} \left\{ y_j\left( \prod_{i = 1}^{j-1} x_i \right) (1-x_j)
\left[ \begin{array}{c}
1 \\ \frac{1}{3}( \prod_{i = 1}^{j-1} x_i^2) (1 + x_j + x_j^2)
\end{array} \right]\right\} +  y_k \left( \prod_{i = 1}^{k-1} x_i\right) \left[ \begin{array}{c}
1 \\ \frac{1}{3} ( \prod_{i = 1}^{k-1} x_i^2)
\end{array} \right] .\]
Clearly, the second coordinate of all vectors is bounded between $0$ and $1$. \qedhere
\end{enumerate}

\subsection*{Proof of \propref{prop:c42}}

First we observe that $\gamma_1$ and $\gamma_2$ are normalizations
of the curves defined by parameterizations $(t, \frac{1}{3}t^3,\frac{1}{6}t^6)$ and
$(1-t, \frac{1}{3}(1-t^3),\frac{1}{6}(1-t^6))$ where $t$ is constrained to the open
interval $(0,1)$.

Now we claim that the definition in terms of the map $\chi(x,y)$ is
equivalent to the definition in terms of these two curves.
We can use the first formulation of $\chi$ to prove this:

\begin{align*}
\chi(x_1,y_1,y_2) & =  y_1 
\left[ \begin{array}{c}
1 -x_1 \\ (1 -x_1^3)/3 \\ (1 - x_1^6)/6 \end{array} \right] 
+ y_2 \left[ \begin{array}{c} x_1 \\ x_1^3/3 \\ x_1^6/6 \end{array} \right] &
\\
& =  y_1  \left[ \begin{array}{c} 1 \\ \frac{1}{3} \\ \frac{1}{6} \end{array} \right]
+(y_2-y_1) \left[ \begin{array}{c} x_1 \\ \frac{1}{3}x_1^3 \\ \frac{1}{6}x_1^6 \end{array} \right]
 = (y_1 - y_2) \left[ \begin{array}{c}
1 -x_1 \\ \frac{1}{3}(1 -x_1^3) \\ \frac{1}{6}(1 - x_1^6) \end{array} \right]  +y_2 \left[ \begin{array}{c} 1 \\ \frac{1}{3} \\ \frac{1}{6} \end{array} \right] .
\end{align*}

When $y_2 = y_1$, the image is the point $(2/3,2/9,1/9) = X$ as
stated. When $y_2 > y_1$, we can use the left-hand expression to view
the image as a point on the line segment between $\C_{4,1}$ and the
curve $(t, t^3/3,t^6/6)$. When $y_2 < y_1$, the right-hand expression can
be used to write the image as a point on the line segment between $X$
and $(1-t,(1-t^3)/3,(1-t^6)/6)$. This means that the image of $\chi$ is
contained in the regions and point specified.

To show that the reverse inclusion holds, we fix a point $P$ in the interior
of the convex hull of $\gamma_1$. By convexity, the line segment
from $X$ to $P$ is contained in the region; continue in the direction
$P-X$ until the line intersects the curve. This must occur because all
points in the region are further from the bounding line than $X$.
The point of intersection $q$ is specified as $q = \gamma_1(\tau)$
for some $\tau \in (0,1)$. By convexity,
there exists some $\rho$ such that $\rho \:\:
\C_{4,1} + (1-\rho)q = P$. Fixing $x_1 = \tau$,
$y_1 = \rho$ and $y_2 = 1$, shows that
$P$ is in the image of $\chi$. The same argument
holds with slight variation for $\gamma_2$.

\subsection*{Proof of \propref{prop:c43}}

The strategy to prove the equality of $\C_{4,3}$ and the cone
over $\{ t,t^3, t^6\}$ comes in two steps:
\begin{enumerate} \item Show that the columns of $M_1(4,k)$ are
always contained in the region $R$ whose boundary is $\gamma_1 \cup \gamma_2 \cup
\gamma_3$.

\item Divide the convex hull of $R$ into two regions and show that
each of these regions are included in $\widehat{\C}_{4,3}$.
\end{enumerate}

First we demonstrate that the regions maps precisely into $R$. We have
already shown in the main text of the document that the boundaries of
$(0,1) \times (0,1)$ map to the boundaries of $R$ under the mapping defined
by $(x_1,x_2)\mapsto (x_1(1-x_2),x_1^3(1-x_2^3)/3,x_1^3(1-x_2^3)/6) \times 1/S$,
where $S$ is the sum of the coordinates. We compute
the Jacobian of this map explicitly in {\tt Macaulay2} \citep{M2}. The result
is: \[-\frac{1}{6S^3}x_1^9(x_2-1)^4(x_2^2+x_2+1)(x_2^2+3x_2+1). \]
Plainly, this is nowhere zero in our domain. The inverse function theorem
then implies that the interior is contained in the image of the boundaries.
This accomplishes Step 1 of our proof.

For Step 2, we divide the image into two regions: 
\begin{enumerate}
\item The triangle defined by vertices $(1,0,0)$,
$(2/3,2/9,1/9)$ and $(1/3,1/3,1/3)$, including the
two edges $[(1/3,1/3,1/3),(2/3,2/9,1/9)]$ and
$[(2/3,2/9,1/9),(1,0,0)]$.
\item The remainder of the convex hull of $R$ --
explicitly, the interior of the region bounded by $\gamma_3$ and the line
segment $[(1/3,1/3,1/3),(1,0,0)]$.
\end{enumerate}

To show that the triangle is included, let $x_2 = \epsilon \approx
0$, and let $x_1$ vary. Then the third column sits arbitrarily
close to $(1,0,0)$ and the first column traces out $\gamma_2$.
Set $y_2 \approx 0$ and toggle $y_1$ and $y_3$, to obtain the full
span, including the interior of the triangle, and the line segment
$[(1/3,1/3,1/3),(2/3,2/9,1/9)]$.
Set $x_1 = 1-\epsilon$, and the first column sits at $(1/3,1/3,1/3)$
while the third column traces out $\gamma_1$. This catches the missing
line segment.

For the remainder of the convex hull, fix a point $P$ in this region.
This point lies on a line segment between
$(2/3,2/9,1/9)$ and some point $Q$ in $\gamma_3$. Suppose it is
equal to $\rho \cdot (2/3,2/9,1/9) + (1-\rho)\cdot Q$.
Set $x_2 = 1 - \epsilon \approx 1$.
We can choose $\epsilon$ and $x_1$ so that the second column
is arbitrarily close to $P$. Furthermore, observe that the first column
is approximately equal to the point on
$\gamma_2$ corresponding to $x_1$ and the third column is approximately
the point on $\gamma_1$ corresponding to $x_1$.
Choosing $y_1 = y_3 = \rho $ and $y_2 = 1 - \rho$ points us to
\[\rho \cdot \left(\left( \begin{array}{c}| \\ \gamma_1(x_1) \\| \end{array} \right)
+ \left( \begin{array}{c} | \\ \gamma_2(x_1) \\| \end{array} \right) \right)
+ (1 - \rho)\cdot \left( \begin{array}{c}| \\ \gamma_3(x_1) \\| \end{array} \right)
= \rho \cdot \left( \begin{array}{c} 2/3 \\ 2/9 \\ 1/9 \end{array} \right)   + (1-\rho) \cdot Q = P.\]

\subsection*{Proof of \propref{prop:sfs}}

This is a direct application of the linear map $W_4$,
computed as in \cite{PK}:
\[
W_4 = \left(\begin{array}{ccc} 6/5 & 2 & 4/5 \\ 6/5 & 0 & -6/5 \\ 6/5 & -2 & 4/5 \\
\end{array} \right).
\]

\subsection*{Proof of \propref{prop:cnk}}

In order to prove the result about dimension, we show that
$\C_{n,k}$ is a relatively open subset of a certain algebraic variety.
Because the relevant operations are native to projective geometry, we
transport our objects of interest in the obvious way to projective space.
The same scaling properties that allow us to focus on the simplex also
lead to good behavior in projective space.

\begin{lemma}
For $k \geq 2$ , the Zariski closure of $\C_{n,k}$ is the affine cone
over $\join(\sigma_{k-2}(C_n,p_n))$, where:
\begin{enumerate}
\item $C_n$ is the projective curve defined by mapping $[s:t]$ to
\[ C_n = \left[{\binom{2}{2}}^{-1}  \spex{s}{4pt}{\binom{n}{2} - \binom{2}{2}} \spex{t}{4pt}{\binom{2}{2}}:
{\binom{3}{2}}^{-1} \spex{s}{4pt}{\binom{n}{2} - \binom{3}{2}} \spex{t}{4pt}{\binom{3}{2}}:
\cdots:
{\binom{n}{2}}^{-1}  \spex{t}{4pt}{\binom{n}{2}}\right],
\]
\item $p_n$  is the projective point $\Big[1: \frac{1}{3}: \frac{1}{6}: \cdots: \frac{1}{\binom{n}{2}}\Big]$,
\item $\mathcal{J}$  denotes the join of algebraic varieties, and
\item $\sigma_{i}(\cdot)$ denotes the $i$-th secant variety.
Following \citet{Harris}, the $i$-th secant variety is the union
of $i$-dimensional planes generated by $i+1$ points in the variety.
\end{enumerate}
\label{lem:join}
\end{lemma}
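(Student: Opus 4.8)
The plan is to extract the description of $\C_{n,k}$ from the second matrix formulation \eqref{formula2} and then upgrade the resulting set-theoretic containment to an equality of algebraic sets by a dimension count; throughout I work with affine cones in $\CC^{n-1}$ (resp.\ $\RR^{n-1}$), which is legitimate since the scaling relation $\chi(\vec{x},a\vec{y})=a\,\chi(\vec{x},\vec{y})$ makes $\C_{n,k}$ a cone, so that ``the affine cone over $\join(\sigma_{k-2}(C_n,p_n))$'' is exactly the right object to compare it to.

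\textbf{Step 1 (the easy inclusion).} Set $v(t):=\bigl(t,\tfrac{1}{3}t^{3},\tfrac{1}{6}t^{6},\ldots,\tfrac{1}{\binom{n}{2}}t^{\binom{n}{2}}\bigr)$; putting $s=1$ in the stated parametrization of $C_n$ and passing to the affine cone, $C_n$ is the closure of $\{\,\lambda\,v(t):\lambda,t\,\}$, and $p_n=v(1)$ in fact lies on $C_n$ (it is listed separately only because the first column of $M_2(n,k)$ is pinned to it while the others roam). For $j\geq 2$ the $j$-th column of $M_2(n,k)$ equals $v(u_j)$ with $u_j:=\prod_{i=1}^{j-1}x_i\in(0,1)$, and the first column is $v(1)=p_n$, so \eqref{formula2} reads
\[
\mathbf{c}\;=\;y_1\,p_n\;+\;\sum_{j=2}^{k}(y_j-y_{j-1})\,v(u_j).
\]
Hence every $\mathbf{c}\in\C_{n,k}$ is a linear combination of $p_n$ and $k-1$ points of $C_n$, i.e.\ it lies in the affine cone over $\join(\sigma_{k-2}(C_n,p_n))$ (the $k-1$ roaming columns account for the $(k-2)$-th secant variety of $C_n$; the join with $p_n$ for the pinned column). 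Since $C_n$ is an irreducible rational curve, this cone is irreducible. Writing $\overline{\,\cdot\,}$ for Zariski closure, this proves $\overline{\C_{n,k}}\subseteq(\text{affine cone over }\join(\sigma_{k-2}(C_n,p_n)))$.

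\textbf{Step 2 (the reverse inclusion, by dimension).} Let $\Psi\colon\CC^{2k-1}\to\CC^{n-1}$ be the polynomial map $\Psi(\lambda_1,\ldots,\lambda_{k-1},t_1,\ldots,t_{k-1},\mu)=\mu\,p_n+\sum_{j=1}^{k-1}\lambda_j\,v(t_j)$. Unwinding the definitions of secant variety and join in the statement (and noting that the point $[0:\cdots:0:1]$ at infinity of $C_n$ is a limit of $v(t)$, hence already in the closure, so the affine chart $v$ loses nothing), one checks that $\overline{\Psi(\CC^{2k-1})}$ is precisely the affine cone over $\join(\sigma_{k-2}(C_n,p_n))$. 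The substitution $\mu=y_1$, $\lambda_j=y_{j+1}-y_j$, $t_j=\prod_{i=1}^{j}x_i$ is a diffeomorphism from $(0,1)^{k-1}\times(0,\infty)^{k}$ onto the region
\[
D:=\{\mu>0\}\cap\{1>t_1>\cdots>t_{k-1}>0\}\cap\bigcap_{\ell=1}^{k-1}\{\mu+\lambda_1+\cdots+\lambda_\ell>0\},
\]
and under it the map $\chi$ becomes $\Psi|_D$, so $\C_{n,k}=\Psi(D)$. Now $D$ is a nonempty Euclidean-open subset of $\RR^{2k-1}$, hence Zariski-dense in $\CC^{2k-1}$, so the proper Zariski-closed locus where the Jacobian of $\Psi$ drops below its generic rank $r$ cannot contain $D$; therefore $\dim\C_{n,k}=\dim\Psi(D)=r=\dim\overline{\Psi(\CC^{2k-1})}$. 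A Zariski-closed subset of an irreducible variety of the same dimension equals it, so $\overline{\C_{n,k}}=\overline{\Psi(\CC^{2k-1})}=(\text{affine cone over }\join(\sigma_{k-2}(C_n,p_n)))$, which is the lemma.

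\textbf{Main obstacle.} The delicate step is Step 2: showing rigorously that imposing the semialgebraic constraints that cut out $\C_{n,k}$ inside the secant/join picture — the ordering $1>u_2>\cdots>u_k>0$ and the positivity of all $y_j$ — does not shrink the Zariski closure. This is where one must be careful about real versus complex points, about Zariski-density of the open region $D$, and about the generic-rank/dimension bookkeeping. The purely algebro-geometric inputs (irreducibility of secant varieties and joins of irreducible varieties, that $C_n$ is a rational curve, and that the closure of the image of a polynomial map has dimension equal to the generic Jacobian rank) are standard, e.g.\ from \citet{Harris}; the only other routine point is the direct identification of $\overline{\Psi(\CC^{2k-1})}$ with the affine cone over $\join(\sigma_{k-2}(C_n,p_n))$, including the behaviour at the point at infinity of $C_n$.
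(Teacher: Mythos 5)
Your proposal is correct and follows essentially the same route as the paper: both obtain the containment by reading off from \eqref{formula2} that every point of $\C_{n,k}$ is a linear combination of $p_n$ and $k-1$ points of $C_n$, and both upgrade this to equality of Zariski closures by combining irreducibility of the join/secant variety with a generic-rank dimension count showing that the (Zariski-dense) real parameter region already attains the full dimension of the image. Your version is somewhat more explicit about the real-versus-complex bookkeeping and the open region $D$, but the underlying argument is the paper's.
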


\begin{proof}[Proof of Lemma~\ref{lem:join}]
The variety $\join(\sigma_{k-2}(C_n),p_n))$ is the image of the following map:
\[\psi(\vec s, \vec t, \vec \lambda) = \left( \begin{array}{cccc} 1 & s_1^{\binom{n}{2} - 1}
 t_1 & \cdots &
 s_{k-1}^{\binom{n}{2} - 1}
t_{k-1} \\[1mm]
\frac{1}{3} & \frac{1}{3} s_1^{\binom{n}{2} - 3} t_1^3  & \cdots & \frac{1}{3}  s_{k-1}^{\binom{n}{2} - 3} t_{k-1}^3 \\
\vdots & \vdots & \ddots & \vdots \\[1mm]
\frac{1}{\binom{n}{2}} & \frac{1}{\binom{n}{2}} t_1^{\binom{n}{2}} & \cdots & \frac{1}{\binom{n}{2}} t_{k-1}^{\binom{n}{2}} \end{array} \right)
\left( \begin{array}{c} \lambda_0 \\ \lambda_1 \\ \vdots
\\ \lambda_{k-1} \end{array}\right),
\]
where $s_i$ and $t_i$ are not simultaneously zero, and $\lambda$ is unrestricted.

Define the map $\phi: \RR^{2k-1} \to (\PP^1)^{k-1} \times \RR^{k}$ sending
\[ (x_1,\ldots,x_{k-1},y_1,\ldots,y_k) \mapsto
\left([1:x_1],[1:x_1x_2],\ldots,[1:\prod_{i=1}^{k-1} x_i],
y_1,y_1 + y_2, \ldots,\sum_{i=1}^k y_i \right)\]
We can recast the expression in \eqref{formula2}
as the composition $\psi \circ \phi$.
Based on this formulation, the set $\C_{n,k}$ is clearly contained
in $\join(\sigma_{k-2}(C_n),p)$.
To demonstrate the equality of the Zariski closures,
we only need to show that the dimensions match and
that the variety is irreducible. Both joins and secants have
the property that irreducible inputs yield irreducible outputs, so
the variety of interest is irreducible. The image of
$\phi$ is open in $(\PP^{1})^{k-1} \times \PP^{k-2}$, and the
map $\psi$ has deficient rank on a set of positive codimension.
Therefore, the composition of $\psi \circ \phi$ has full dimension.
This proves the Lemma.
\end{proof}

The $i$-th secant variety of an irreducible nondegenerate curve in $\PP^n$
has projective dimension given by $\min(2i+1,n)$
\citep[Exercise 16.16]{Harris}. The curve $C_n$ is a toric transformation
of a coordinate projection of the rational normal curve.
The rational normal curve is nondegenerate, and both of
these operations preserve that property.
This means our secant variety has projective dimension
$\min(2(k-2)+1,n-2) = \min(2k-3,n-2)$.
The join with a point adds 1 to the dimension of the variety,
while the operation of passing to the affine cone adds 1 to the dimension
of the variety {\bf and} the ambient space. However, normalizing to the
$(n-2)$-simplex subtracts 1 from both variety and ambient space again.
This means that $\dim \widehat{\C}_{n,k} = \min(2k-2,n-2)$,
assuming that $k \geq 2$.

\subsection*{Proof of upper bound in \thmref{thm:kappa_bounds}}

Suppose a point ${\bf c}$ is in $\C_{n,q}$. By definition, this implies that
there is a point $(x_1,\ldots,x_{q-1},y_1,\ldots,y_q)$ such that \eqref{formula1} yields

\begin{small}
\medmuskip=-1mu 
\thinmuskip=-1mu
\thickmuskip=-1mu
\nulldelimiterspace=0pt
\scriptspace=0pt
\[\left[
\begin{array}{ccccc}
1-x_1 & x_1(1-x_2) & \cdots & \prod_{i=1}^{q-2} x_i(1- x_{q-1}) & \prod_{i=1}^{q-1} x_i \\
(1/3)(1-x_1^3) & (1/3)x_1^3(1-x_2^3) & \cdots & (1/3)\prod_{i=1}^{q-2} x_i^3(1- x_{q-1}^3) & (1/3)\prod_{i=1}^{q-1} x_i^3 \\
\vdots & \vdots & \ddots & \vdots & \vdots\\
\binom{n}{2}^{-1}(1-\spex{x_1}{4pt}{\binom{n}{2}}) &
\binom{n}{2}^{-1}\spex{x_1}{4pt}{\binom{n}{2}}(1-\spex{x_1}{4pt}{\binom{n}{2}}) &  \cdots &
\binom{n}{2}^{-1}\prod_{i=1}^{q-2}\spex{x_i}{4pt}{\binom{n}{2}} (1-\spex{x_{q-1}}{4pt}{\binom{n}{2}}) &
\binom{n}{2}^{-1}\prod_{i=1}^{q-1} \spex{x_i}{4pt}{\binom{n}{2}} \\
\end{array} \right]
\left[\begin{array}{c}
y_1 \\ \vdots \\ y_q
\end{array}
\right].
\]
\end{small}

\noindent Since the point ${\bf c}$ is in the cone
over the $q$ columns of the matrix,
Carath\'{e}odory's Theorem implies
that it is also in the cone over some $n-1$ of the
columns. Therefore we can replace the vector
$y_1,\ldots,y_q$ with $y'_1,\ldots, y'_q$
so that all but $n-1$ (or fewer) are zero.

Passing to the expression in \eqref{formula2}, this gives us:
\begin{small}
\[\left[
\begin{array}{ccccc}
1 & x_1 & \cdots & \prod_{i=1}^{q-2} x_i & \prod_{i=1}^{q-1} x_i \\
(1/3) & (1/3)x_1^3 & \cdots & (1/3)\prod_{i=1}^{q-2} x_i^3& (1/3)\prod_{i=1}^{q-1} x_i^3 \\
\vdots & \vdots & \ddots & \vdots & \vdots\\
\binom{n}{2}^{-1} &
\binom{n}{2}^{-1}\spex{x_1}{4pt}{\binom{n}{2}} &  \cdots &
\binom{n}{2}^{-1}\prod_{i=1}^{q-2}\spex{x_i}{4pt}{\binom{n}{2}} &
\binom{n}{2}^{-1}\prod_{i=1}^{q-1} \spex{x_i}{4pt}{\binom{n}{2}} \\
\end{array} \right]
\left[\begin{array}{c}
y_1'  \\
y_2' - y_1' \\
\vdots \\ y_q' - y_{q-1}'
\end{array}
\right].
\]
\end{small}

\noindent
Since at most $n-1$ of the $y_i'$ are nonzero, at most
$2n-2$ of the indices of the vector at right are nonzero.
We can delete the columns of the $X$ matrix corresponding to zero
entries except the first column.
A new sequence $(x_1',\ldots,x_{2n-2}')$ can then be obtained
from the ratio between the first entries in adjacent columns.
The new sequence $y_1'',\ldots,y_{2n-1}''$
can be obtained by taking the sequence of partial sums of the vector.

\section*{Acknowledgments} 
We thank the Simons Institute for the Theory of Computing, where some of this
work  was carried out while the authors were participating in the ``Evolutionary Biology
and the Theory of Computing'' program.
This research is supported in part by a Math+X Research Grant, 
an NSF grant DMS-1149312 (CAREER), an NIH grant R01-GM109454, and a Packard Fellowship for Science and Engineering.
YSS is a Chan Zuckerberg Biohub investigator.

\bibliographystyle{myplainnat}

\bibliography{sfs}

\end{document}